\newtheorem{definition}{Definition}[section]
\newtheorem{theorem}{Theorem}[section]
\newtheorem{proposition}{Proposition}[section]
\newtheorem{lemma}{Lemma}[section]
\newtheorem{claim}{Claim}[section]
\newtheorem{claim*}{Claim}[section]
\newtheorem{example}[theorem]{Example}
\author{Johannes Brustle\footnote{
London School of Economics, London, UK. Email: \texttt{j.brustle@lse.ac.uk}}
 \and 
 Paul D\"utting \footnote{
Google Research, Zurich, Switzerland. Email: \texttt{duetting@google.com}}
\and
Balasubramanian Sivan
\footnote{
Google Research, New York City, New York, USA. Email: \texttt{balusivan@google.com}
}}}
\begin{document}

\title{Price Manipulability in First-Price Auctions}

\date{}

\newcommand{\NN}{\ensuremath{\mathbb{N}}}
\newcommand{\RR}{\ensuremath{\mathbb{R}}}
\newcommand{\EE}{\ensuremath{\mathbb{E}}}
\newcommand{\mE}{\mbox{$\mathbb E$}}

\newcommand{\bx}{\mathbf{x}}
\newcommand{\bp}{\mathbf{p}}
\newcommand{\bv}{\mathbf{v}}
\newcommand{\bb}{\mathbf{b}}
\newcommand{\bq}{\mathbf{q}}
\newcommand{\D}{\mathcal{D}}
\newcommand{\M}{\mathcal{M}}
\newcommand{\N}{\mathcal{N}}
\newcommand{\I}{\mathcal{I}}
\newcommand{\eps}{\mbox{$\epsilon $}}
\DeclarePairedDelimiter{\ceil}{\lceil}{\rceil}

\newcommand{\ps}{p^{\text{strategic}}}
\newcommand{\ph}{p^{\text{honest}}}
\newcommand{\winners}{W}

\newcommand{\Prr}[2]{\mbox{\rm\bf Pr}_{#1}\left[#2\right]}
\newcommand{\Ex}[2]{\mbox{\rm\bf E}_{#1}\left[#2\right]}

\newcommand{\OPT}{\mathrm{OPT}}
\newcommand{\ALG}{\mathrm{ALG}}
\newcommand{\erf}{\mathrm{erf}}
\newcommand{\Ber}{\mathrm{Ber}}
\newcommand{\growingmid}{\mathrel{}\middle|\mathrel{}}

\newcommand{\SOLD}{\mathrm{SOLD}}

\newcommand{\johcomm}[1]{\textcolor{black}{#1}}

\colorlet{teal}{black}

\maketitle
\begin{abstract}
 First-price auctions have many desirable properties, including uniquely possessing some, like credibility. 
 \textcolor{teal}{However, first-price auctions are also inherently non-truthful, and non-truthfulness may result in instability and inefficiencies.} 
 \textcolor{teal}{Given these pros and cons, we seek to quantify the extent to which first-price auctions are susceptible to manipulation.}
 
\textcolor{teal}{In this work} we adopt a metric that was introduced in the context of bitcoin fee design markets: the percentage change in payment that can be achieved by being \textcolor{teal}{strategic}. \textcolor{teal}{We study the behavior of} this metric for single-unit and $k$-unit auction environments with $n$ i.i.d.~buyers, \textcolor{teal}{and seek conditions under which the percentage change tends to zero as $n$ grows large.}

\textcolor{teal}{To the best of our knowledge, ours is the first rigorous study of the extent to which large multi-unit first price auctions are susceptible to manipulation. We provide an almost complete picture of the conditions under which they are ``truthful in the large,'' and exhibit some surprising boundaries.} 
\end{abstract}
\section{Introduction}\label{sec:intro}
First-price auctions enjoy many desirable properties including transparency (winners-pay-bid)  \textcolor{teal}{\cite{Krishna09}}, and uniquely possess some properties like credibility~\cite{AL20}. They have been put to use in a large variety of settings. Their attractive properties have been influential in causing the recent switch in the display ads industry's ad exchanges to first-price auctions. 
\textcolor{teal}{A potential drawback of first-price auctions is that they are not truthful. That is, bidders may benefit from misreporting their valuation/willingness to pay. Non-truthfulness is problematic for a number of reasons. A main concern is that in the absence of a dominant-strategy equilibrium, it's usually a non-trivial task to find an equilibrium bidding strategy. Moreover, if bidders have to learn how to bid then they will update their bids frequently, causing a lot of traffic to the website, and resulting in an instable and upredictable system as a whole.}


\textcolor{teal}{A cautionary tale regarding the use of first-price auctions comes from the early days of search ads \cite{Edelman07}. In their influential paper, Edelman and Ostrovsky performed an analysis of bidding data from Overture, which was one of the first companies to provide a paid placement service, and used a first-price auction to determine the allocation of ads and payments. Their analysis showed that bids were very unstable and often exhibited a sawtooth pattern of gradual price increases followed by sudden drops (cf.~Figure 2 in their paper).}

\textcolor{teal}{Given these pros and cons of first-price auctions, it seems in-dispensable---and of great practical relevance---to have tools that allow to quantify the extent to which first price-auctions are susceptible to manipulation.}

\subsection{Our Approach}



\textcolor{teal}{We adopt a notion of approximate incentive compatibility that  was recently introduced by \cite{LSZ19}, who used it to study different bitcoin fee designs.} The idea is to measure the maximum relative change in payment that any agent can achieve by being strategic. \textcolor{teal}{We use this metric to study the $k$-unit first-price auction with $n$ bidders, whose values are i.i.d.~draws from $F$. This auction receives one bid from each agent. The $k$ highest bidding agents ``win'' and are assigned one unit of the good, and have to pay their bid.}

\textcolor{teal}{To formally define the metric, for} any given valuation profile $\bv$, let $v_{(r)}$ denote the $r$-th largest value. 
Let $\ph_i(\bv) \equiv v_i$ whenever $v_i \geq v_{(k)}$ and $0$ otherwise, denote the agent $i$'s payment when all the agents, including $i$, bid their true value. Let $\ps_i(\bv) \equiv v_{(k+1)}$ whenever $v_i \geq v_{(k)}$ and $0$ otherwise, denote the smallest payment the agent $i$ can manage to get charged and still get allocated, when $i$ is strategic and all other agents are honest. The relative price change for agent $i$ is given by $\delta_i(\bv, n, k) \equiv 1 - \ps_i(\bv)/\ph_i(\bv)$ whenever $v_i \geq \bv_{(k)}$ and $0$ otherwise. The quantity of interest for us is 
\begin{align}\EE[\delta_{\max}(\bv, n, k)]  \equiv \EE[\max_i \delta_i(\bv, n, k)],\label{eq:delta-max}
\end{align}
namely, the maximum relative price change that can occur for any agent. For brevity, from here on we drop the arguments of $\delta_{\max}$. 

\begin{example}[Two uniform bidders] \textcolor{teal}{Suppose that there are two bidders with $v_i$ for $i \in \{1,2\}$ drawn independently from $U[0,1]$, and that there is one item.}
\textcolor{teal}{Consider the random variables $v_{\max} = \max\{v_1,v_2\}$ and $v_{\min} = \min\{v_1,v_2\}$. Then $\ph_i(\bv) = v_{\max}$ if $v_i = v_{\max}$ and $\ph_i(\bv) = 0$ otherwise. Similarly, $\ps_i(\bv) = v_{\min}$ if $v_i = v_{\max}$ and  $\ps_i(\bv) = 0$ otherwise.}
\textcolor{teal}{The joint density of $(v_{\min},v_{\max})$ is $f(v_{\min},v_{\max}) = 2$. So \[\EE[v_{\min}/v_{\max}] = \int_{0}^{1} \int_{v_{\min}}^{1} \frac{v_{\min}}{v_{\max}} \cdot 2 \;dv_{\max} \;dv_{\min} = \frac{1}{2}\]
and $\EE[\delta_{\max}] = 1-\EE[v_{\min}/v_{\max}] =  1/2$.}
\end{example}

In this work we ask the following: 
1) 
\johcomm{When do the benefits from bid manipulation vanish to zero in the large market limit, i.e., $\EE[\delta_{\max}] \rightarrow$ 0?}
2) Which distributions allow $\EE[\delta_{\max}] \rightarrow 0$? 
3) Do the distributions need to be bounded? Does regularity help? Does monotone hazard rate (MHR) help? 
The answers are not straightforward, and are \emph{not} as simple as letting $n \to \infty$ will automatically let $\EE[\delta_{\max}] \rightarrow 0$. 

\subsection{Discussion}


\textcolor{teal}{\noindent 1) \emph{Deviation from truthfulness or from BNE?} The definition in Equation~(\ref{eq:delta-max}) measures the maximum percentage price change \emph{assuming everyone else bids truthfully}. Wouldn't it be more natural to define it with respect to some Bayes-Nash equilibrium (BNE)? It turns out that for all settings we consider, that is whenever $k = o(n)$ and $n \rightarrow \infty$, then a) there is a unique BNE \cite{ChawlaH13} (this part is true for all $n$ and $k$) and b) a simple application of Myerson's payment identity \cite{Myerson81} shows that asymptotically the two quantities coincide.}

\medskip

\noindent\textcolor{teal}{2) \emph{Why this measure and not something else?} A first reason is that it's very practitioner friendly. Indeed, in sharp contrast to other metrics that have been proposed for measuring approximate incentive compatibility (see Section~\ref{sec:related}), the metric defined in Equation~(\ref{eq:delta-max}) does \emph{not} require the auctioneer to reverse engineer the agents' valuations to e.g.~draw conclusions about the utility they could gain from a misreport. It only requires knowledge of the bids and prices, which is precisely the data the auctioneer has access to. A second reason is that the notion of ICness captured by Equation~(\ref{eq:delta-max}) is ``ex post'' (or ``full info''). It measures the extent to which an agent that has learned about the other agents' bids can game the system.} 
\johcomm{As observed in \cite{Edelman07}{ (Section 4)}, there is indeed evidence that bidders observe their competitors' recent behavior and use this information to adjust their own bids.} 

\medskip

\noindent\textcolor{teal}{3) \emph{Expected ratio vs. ratio of expectations.}}
\johcomm{It is important to note that the main part of our analysis consists of studying the expectation of ratios of order statistics, which is non-trivial and quite different from the ratio of expectations of order statistics.} \textcolor{teal}{Indeed, if $X^n_{(r)}$ denotes the $r$-th order statistic among $n$ i.i.d. draws of a random variable, it follows that $\EE[\delta_{\max}(\bv,n,k)] = 1 - \EE[X^n_{(k+1)}/X^n_{(1)}].$}

\subsection{Our Results}

Our results are summarized in Table~\ref{tab:results_n}. 
For different classes of distributions, for different regimes of $k$ covering the whole spectrum from $k=1\dots n$, we ask whether every distribution in that class provides no room for bid manipulation (i.e., whether $\lim_{n\to\infty}\EE[\delta_{\max}] = 0$), or, whether there are distributions in that class permitting bid manipulation (i.e., whether $\lim_{n\to\infty}\EE[\delta_{\max}] > 0$). 

\begin{table*}
  \centering
  \begin{tabular}{lllp{3.4cm}}
    \toprule
     & Bounded support & MHR, continuous &\parbox{5cm}{$\alpha$-strongly regular,\\  $\alpha \in [0,1)$}\\
    \midrule
    $k = \Theta(1)$  & $\lim_{n\to\infty}\EE[\delta_{\max}] = 0$  & $\lim_{n\to\infty}\EE[\delta_{\max}] = 0$  & $\lim_{n\to\infty}\EE[\delta_{\max}] > 0 $  \\
    $k = o(n)$      & $\lim_{n\to\infty}\EE[\delta_{\max}] = 0$ & $\lim_{n\to\infty}\EE[\delta_{\max}] > 0$  & $\lim_{n\to\infty}\EE[\delta_{\max}] > 0$\\
    \midrule
    $k = \Theta(n)$     & $\lim_{n\to\infty}\EE[\delta_{\max}] > 0$ & $\lim_{n\to\infty}\EE[\delta_{\max}] > 0$  & $\lim_{n\to\infty}\EE[\delta_{\max}] > 0$\\
    \bottomrule
  \end{tabular}
  \caption{Relative price change for first-price auctions. The last row is immediate, and has been included for completeness.}
  \label{tab:results_n}
\end{table*}

\textcolor{teal}{For the single-unit case, we derive a sharp characterization of when the percentage change in payment goes to zero. For any bounded-support distribution the percentage change in payment goes to zero. For unbounded support distributions, we show that monotone hazard rate (MHR) is a \emph{necessary and sufficient condition} for percentage change in payment to vanish. In particular, for the class of  $\alpha$-strongly regular distributions, for any $\alpha < 1$ \textcolor{teal}{($\alpha = 1$ is MHR)}, the percentage change in payment does not vanish for unbounded support distributions. For the $k$-unit case, we again derive a tight characterization: the percentage change in payment for any bounded-support distribution goes to zero if and only if \textcolor{teal}{$k = o(n)$}. \textcolor{teal}{We don't fully resolve the unbounded distribution case, but show that, interestingly, for $k$ as small as $\sqrt{n}$ the percentage change in payment may \emph{not} vanish. How small should $k$ be for it to vanish is a concrete open question.}}

\textcolor{teal}{Obtaining \textcolor{teal}{such sharp separation results} entails finding the precise property that makes a distribution susceptible or robust to bid manipulation \textcolor{teal}{(which we develop in Section~\ref{sec:unbounded})}. We identify this quantity to be the limit of the ratio of the median-of-the-maximum-of-$n$-draws and median-of-the-maximum-of-$2n$-draws as $n\to\infty$. Roughly, we show that distributions where this ratio is strictly lesser than $1$ are susceptible to bid manipulation, while distributions where this ratio is $1$ (for not just $2n$ draws in the denominator, but $\ell n$ draws for any constant $\ell$), are robust to bid manipulation. The novelty is in arriving at this insight. \emph{The proofs are simple and clean to read after this key insight is found}.}

\textcolor{teal}{We find our results somewhat surprising in two ways: (1) Usually, the class of MHR distributions\footnote{Intuitively MHR distributions are to be thought of as distributions with tails no heavier than the exponential distribution. See Section~\ref{sec:prelim} for a formal definition.} behave similarly, at least qualitatively, to the class of $\alpha$-strongly regular distributions\footnote{See Section~\ref{sec:prelim} for formal definitions of regular and $\alpha$-strongly regular distributions.}~\cite{CR14}. Indeed, a similar question to ours was asked of a different mechanism: the guarded empirical revenue maximization algorithm (guarded ERM) in~\cite{DLLQWY20}. They show that guarded ERM is incentive compatible as $n\to\infty$ for $\alpha$-strongly regular distributions for \emph{every} $\alpha > 0$. In stark contrast, in this paper, we establish a sharp separation between MHR ($1$-strongly regular) and $\alpha$-strongly regular distributions for every $\alpha < 1$, \emph{even for the $1$-unit auction, i.e., $k=1$}. (2) Moreover, contrary to what one may suspect, we show that even for MHR distributions $k = o(n)$ is \emph{not} sufficient for bid manipuability to vanish. Rather, even within the class of MHR distributions there is a reasonably sharp separation across different $k$.}

\paragraph{Probabilistic participation.}

\textcolor{teal}{All our results actually hold for a more general setting, in which not all the $n$ agents need to participate. In this more general model, each of the $n$ agents participates independently with a probability $p$.}

The probabilistic participation is practically motivated by quota-based throttling that happens in the display ads industry: many buyers cannot adequately handle the extremely high Queries-Per-Second (QPS) that an ad exchange sends them, and hence ask for the requests sent to them to be throttled. The latter is often done probabilistically. 

\subsection{Related Work}\label{sec:related}

\paragraph{Relative price change metric.}

The relative price change metric was introduced in the context of the bitcoin fee design market by~\cite{LSZ19}. The goal here was to analyze the monopolistic price (MP) mechanism. The MP mechanism solicits $n$ bids $b_1,\dots,b_n$ and sets a price $p$ that maximizes $k\cdot b_{(k)}$. To what extent is there a temptation for users to shade their bids in the MP mechanism was the question they studied. They showed that for finite support distributions, $\lim_{n \to \infty} \EE[\delta_{\max}] = 0$ for the MP mechanism, and conjectured that it should hold for all bounded support distributions. 
Later,~\cite{Yao20} confirmed the conjecture for all bounded support distributions in $[1,D]$, and also show that for regular distributions $\lim_{n \to \infty} \EE[\delta_{\max}] > 0$. More recently,~\cite{DLLQWY20} study the guarded\footnote{``Guarded'' just forces the monopolistic mechanism to pick a price that will let at least a certain minimum number of bids win} empirical revenue maximization mechanism, and show that even for $\alpha$-strongly regular distributions for any $\alpha > 0$, $\lim_{n \to \infty} \EE[\delta_{\max}] = 0$. 

\paragraph{Other approaches to approximate IC.} \textcolor{teal}{There is a number of alternative approaches to measuring approximate incentive compatibility. A popular one is based on the notion of \emph{regret} \cite{ParkesKE01,DayM08,DuttingFJLLP12,DuttingNPR19,BalcanSV19,FengSS19,Colini-Baldeschi20}, which measures how much an agent can improve her utility (= value minus payment) through a misreport. Different papers differ in how they aggregate regret across types and agents.}

\textcolor{teal}{A particularly relevant paper that uses a regret-based approach comes from \cite{AzevB18} who propose the concept of strategy-proofness in the large (SP-L). This notion requires that it is
approximately optimal for agents to report their types truthfully in sufficiently large markets.}

\textcolor{teal}{Another related line of work is on testing incentive compatibility \cite{LahaieEtAl18,DengL19}. Here the basic question is how agents participating in a repeated auction can test whether the auction is incentive compatible. Their approach is based on segmenting query traffic into buckets and systematically perturbing bids.}

\textcolor{teal}{A similar idea was used recently \cite{DengMLZ20} to propose a IC metric based on Myerson's payment identity. The notion compares the expected utility by uniform bid shading or bid inflation by a factor $(1+\alpha)$. It evaluates to $1$ if the auction is IC, if it is smaller or larger than that, then an agent may benefit from increasing/decreasing its bid uniformly.}

\textcolor{teal}{Other papers (such as \cite{PathakS13,TroyanM20}) propose metrics of approximate IC that are applicable to settings where the utilities are ordinal (and their is no notion of money).} 

\subsection{Open Questions}

\textcolor{teal}{A first obvious open question is finding the exact boundary for (unbounded) MHR distributions with $k > 1$ units. We show that price manipulability does not vanish for $k \geq \sqrt{n}$ (so $k = o(n)$ is not sufficient), but how small does $k$ have to be for price manipulability to vanish?}

A \textcolor{teal}{second} general direction is to study feasibility constraints beyond $k$-units, such as matroids and beyond, along with probabilistic participation. A good place to start would be the $G(n,p)$ model where each edge in the graph participates independently with probability $p$. What can we show about the relative price change in these settings?

\textcolor{teal}{It would also be interesting to study the non-asymptotic behavior of the metric we proposed, e.g., in form of upper bounds on the percentage price change and how fast it decreases to zero as a function of $n, k$ and properties of the distribution.}


\section{Preliminaries}
\label{sec:prelim}
\paragraph{Setting, first-price auction.} We consider a setting with $n$ agents, with values drawn i.i.d. from a distribution with cdf $F$. There are $k$ identical units of an item to be sold, and each agent is unit-demand. The seller runs a first-price auction: solicit bids, award the $k$ units to the $k$ highest bids, and each allocated buyer pays their bid. We consider a setting with probabilistic participation, where each agents participates independently with probability $p$. Thus the number of participating agents $m~\sim B(n,p)$ is binomially distributed with parameters $n,p$.

\paragraph{Regular, $\alpha$-regular distributions, MHR.} Regular distributions are those for which $\phi(v) = v - (1-F(v))/f(v)$ is a weakly increasing function of $v$. An $\alpha$-regular distribution is one for which $\phi(v') - \phi(v) \geq \alpha(v' - v)$ for all $v',v$ in the support. MHR distributions are those for which $f(v)/(1-F(v))$ is a weakly increasing function of $v$. Equivalently, a $0$-regular distribution is a regular distribution and a $1$-regular distribution is a MHR distribution. In the literature they are referred to as $\alpha$-strongly regular distributions, and we refer to them as $\alpha$-regular for short. 

Our results for bounded support distributions hold for all distributions: discrete, continuous, mixed. Our results for MHR distributions assume that the distributions are continuous. Our results for $\alpha$-strongly regular distributions are negative results, and hence continuity doesn't matter.

\paragraph{Relative price change.} For any given valuation profile $\bv$, let $v_{(r)}$ denote the $r$-th largest value. 
Let $\ph_i(\bv) \equiv v_i$ whenever $v_i \geq v_{(k)}$ and $0$ otherwise, denote the agent $i$'s payment when all the agents, including $i$, bid their true value. Let $\ps_i(\bv) \equiv v_{(k+1)}$ whenever $v_i \geq v_{(k)}$ and $0$ otherwise, denote the smallest payment the agent $i$ can manage to get charged and still get allocated, when $i$ is strategic and all other agents are honest. The relative price change for agent $i$ is given by $\delta_i(\bv, n, k) \equiv 1 - \ps_i(\bv)/\ph_i(\bv)$ whenever $v_i \geq \bv_{(k)}$ and $0$ otherwise. The quantity of interest for us is $\EE[\delta_{\max}(\bv, n, p, k)] \equiv \EE_{m\sim B(n,p), \bv \sim F^m}[\max_i \delta_i(\bv, m, k)]$, namely, the maximum relative price change that can occur for any agent, when $m\sim B(n,p)$ agents participate. For brevity, from here on we often drop the arguments of $\delta_{\max}$. 

\paragraph{Order statistics.} If $X^n_{(r)}$ denotes the $r$-th order statistic among $n$ i.i.d. draws of a random variable, it follows that 
$$\EE[\delta_{\max}(\bv,n,p, k)] = 1 - \EE_{m\sim B(n,p)}\left[\frac{X^m_{(k+1)}}{X^m_{(1)}}\right].$$
For convenience, we define $X_{(j)}^i =0$ for $j>i$.

\subsection{General approach and binomial tail bounds}\label{sec:binomial}
We split the task of lower bounding $\EE\left[\frac{X^m_{(k+1)}}{X^m_{(1)}}\right]$ into two subtasks using the following inequality:




\begin{equation}\label{firstineq}
    \mE\left[\frac{X_{(k+1)}^m}{X_{(1)}^m}\right] \geq \Pr[m>i] \cdot \EE\left[\frac{X_{(k+1)}^i}{X_{(1)}^i}\right].
\end{equation}
To effectively use this inequality, we obtain tight binomial tail bounds, i.e., bounds on $\Pr[m>i]$. 
\begin{lemma}\label{binomial_bound}
For $m\sim B(n,p)$, and for any $i = o(np)$, $\lim_{n\rightarrow \infty}\Pr[m > i]=1$.
\end{lemma}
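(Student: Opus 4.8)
The plan is to use a concentration argument for the binomial distribution. The claim is that if $m \sim B(n,p)$ and $i = o(np)$, then $\Pr[m > i] \to 1$. Intuitively, this is clear: the mean of $m$ is $np$, and $i$ grows strictly slower than $np$, so the threshold $i$ sits far below the mean. The probability that $m$ falls below a point that is asymptotically negligible compared to its mean should vanish. Equivalently, I want to show $\Pr[m \le i] \to 0$.

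**The cleanest route** is a one-sided Chernoff bound for the lower tail of a binomial. Recall that for $m \sim B(n,p)$ with mean $\mu = np$, and any $\delta \in (0,1)$,
\[
\Pr[m \le (1-\delta)\mu] \le \exp\!\left(-\frac{\delta^2 \mu}{2}\right).
\]
**First I would** set this up by writing $i = (1-\delta_n)np$ where $\delta_n = 1 - i/(np)$. Since $i = o(np)$, we have $i/(np) \to 0$, hence $\delta_n \to 1$; in particular $\delta_n \ge 1/2$ for all large $n$. Then $\Pr[m \le i] = \Pr[m \le (1-\delta_n)np] \le \exp(-\delta_n^2 np / 2) \le \exp(-np/8)$ for large $n$. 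Since $np \to \infty$ (which holds because $i = o(np)$ forces $np \to \infty$; if $np$ were bounded there would be no room for an unbounded-or-even-constant $i$ growing slower than it in the relevant regime), the right-hand side tends to $0$, and therefore $\Pr[m > i] = 1 - \Pr[m \le i] \to 1$.

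**One subtlety to address** is the exact statement of $i = o(np)$ and whether $np \to \infty$. The notation $o(np)$ presupposes we are taking $n \to \infty$ with $p$ possibly depending on $n$; for the asymptotic to be meaningful and for the conclusion to hold we need $np \to \infty$. I would note that in the intended application $p$ is a fixed constant in $(0,1]$, so $np \to \infty$ automatically, but I would phrase the Chernoff step to only require $np \to \infty$. **The main obstacle**, such as it is, is really just bookkeeping: making sure the substitution $\delta_n = 1 - i/(np)$ is valid (i.e.\ $\delta_n \in (0,1)$ eventually, which follows from $0 < i < np$ for large $n$), and confirming that the Chernoff exponent $\delta_n^2 np/2$ diverges. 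Both are immediate once $i/(np) \to 0$ and $np \to \infty$ are in hand, so the proof is short. An alternative would be to invoke Chebyshev's inequality using $\mathrm{Var}(m) = np(1-p) \le np$, giving $\Pr[m \le i] \le \Pr[|m - np| \ge np - i] \le np/(np - i)^2 = 1/(np(1 - i/np)^2) \to 0$; this is slightly weaker but avoids quoting Chernoff and may be the more transparent presentation.
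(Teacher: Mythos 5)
Your proof is correct, and it takes a noticeably more elementary route than the paper's. Both arguments reduce to bounding the lower tail $\Pr[m \le i]$ by a Chernoff-type inequality, but the paper invokes the tight relative-entropy form $\Pr[m \le i] \le \exp\left(-nD\left(\tfrac{i}{n} \,\|\, p\right)\right)$ and then spends two auxiliary lemmas (with functional parametrizations $f(n)=np$, $i=g(f(n))$, $g(x)=x/h(x)$) establishing that the exponent $nD(\tfrac{i}{n}\|p)$ diverges. You instead use the standard multiplicative lower-tail bound $\Pr[m \le (1-\delta)\mu] \le \exp(-\delta^2\mu/2)$ with $\delta_n = 1 - i/(np) \to 1$, which collapses the whole argument to the two observations $i/(np)\to 0$ and $np\to\infty$, yielding $\Pr[m\le i] \le \exp(-np/8)\to 0$ in a few lines; your Chebyshev fallback is equally valid and even more self-contained. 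What the paper's KL-divergence route buys is sharper quantitative control of the exponent (a precise description of when $nD(\tfrac{i}{n}\|p)\to\infty$), but none of that extra precision is used anywhere---the lemma is only ever applied qualitatively---so your shorter proof fully suffices. One point both treatments share, and which you handle at least as carefully as the paper does: the statement implicitly needs $np\to\infty$ (e.g.\ $i=0$, $p=c/n$ gives $\Pr[m>0]\to 1-e^{-c}<1$), and you correctly isolate this as the only hypothesis your Chernoff step requires, noting it holds in the intended application where $p$ is a fixed constant; the paper makes the analogous ``without loss of generality $np=\omega(1)$'' remark.
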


While the proof of Lemma~\ref{binomial_bound} is relatively straightforward, for completeness, we give a proof in Appendix~\ref{app:binomial_bound}. From the next section onwards, the focus is on analyzing $\EE\left[\frac{X_{(k+1)}^i}{X_{(1)}^i}\right]$.

\section{Warmup: Bounded support distributions}\label{sec:bounded}
It is easy to see that for $k=\Theta(np)$, $\lim_{n\rightarrow \infty}\EE[\delta_{\max}(v)]>0$. To see this, consider the Uniform $[0,1]$ distribution and $p=1$: it is easy to see that $\lim_{n \rightarrow \infty}\mE \left[\frac{X_{(k+1)}^n}{X_{(1)}^n} \right]<1$. For every other value of $k$, we obtain a positive result: namely, whenever $k = o(np)$, $\lim_{n\rightarrow \infty}\EE[\delta_{\max}(v)] = 0$, or equivalently $\lim_{n \rightarrow \infty}\mE \left[\frac{X_{(k+1)}^n}{X_{(1)}^n} \right]=1$.
\begin{lemma}\label{bounded_dist}\label{lem:bounded}
For any bounded support distribution $\D$, i.e., $\Pr_{v\sim \D}[v \in [0,H]] =1$ for some $H < \infty$, and $k=o(n)$, we have 
\[
\lim_{n \rightarrow \infty}\mE \left[\frac{X_{(k+1)}^n}{X_{(1)}^n} \right]=1.
\]
\end{lemma}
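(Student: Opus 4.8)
The ratio $X^n_{(k+1)}/X^n_{(1)}$ always lies in $[0,1]$, so it suffices to establish the matching lower bound $\liminf_{n\to\infty}\EE[X^n_{(k+1)}/X^n_{(1)}] \ge 1$. The plan is to exploit that, for a bounded-support distribution with right endpoint $v^* := \sup\{x : F(x) < 1\}$, not only the maximum $X^n_{(1)}$ but also the $(k+1)$-th order statistic $X^n_{(k+1)}$ concentrates at $v^*$ when $k = o(n)$, which forces the ratio toward $1$. If $v^* = 0$ the distribution is a point mass at $0$ and the statement is degenerate, so throughout I assume $v^* > 0$.

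First I would fix $\epsilon \in (0, v^*)$ and reduce the claim to a concentration statement about $X^n_{(k+1)}$. Since for every $x > v^*$ we have $F(x) = 1$, a standard countable-union argument gives $X^n_{(1)} \le v^*$ almost surely. Hence on the event $A_n := \{X^n_{(k+1)} \ge v^* - \epsilon\}$ the ratio satisfies $X^n_{(k+1)}/X^n_{(1)} \ge (v^*-\epsilon)/v^*$, while off $A_n$ it is at least $0$. Restricting to $A_n$ also avoids the division-by-zero issue, since there $X^n_{(1)} \ge v^* - \epsilon > 0$. This yields the clean bound $\EE[X^n_{(k+1)}/X^n_{(1)}] \ge \frac{v^*-\epsilon}{v^*}\,\Pr[A_n]$.

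The heart of the argument is to show $\Pr[A_n] \to 1$. The key observation is that $X^n_{(k+1)} \ge v^* - \epsilon$ holds precisely when at least $k+1$ of the $n$ draws fall in $[v^*-\epsilon, \infty)$. Let $q := \Pr_{v\sim\D}[v \ge v^*-\epsilon]$; by the definition of $v^*$ as the right endpoint of the support, $q > 0$ is a constant \emph{independent of $n$} (otherwise $v^*-\epsilon$ would be an essential upper bound, contradicting the supremum). The number $N_n$ of draws exceeding $v^*-\epsilon$ is thus $\mathrm{Binomial}(n,q)$, so since $k = o(n) = o(nq)$, Lemma~\ref{binomial_bound} applied with success probability $q$ gives $\Pr[N_n > k] \to 1$, whence $\Pr[A_n] \ge \Pr[N_n \ge k+1] \to 1$.

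Combining these, $\liminf_{n\to\infty} \EE[X^n_{(k+1)}/X^n_{(1)}] \ge (v^*-\epsilon)/v^*$ for every $\epsilon \in (0,v^*)$; letting $\epsilon \to 0$ gives $\liminf_{n\to\infty} \EE[X^n_{(k+1)}/X^n_{(1)}] \ge 1$, which together with the trivial upper bound $\le 1$ finishes the proof. I expect the main obstacle to be conceptual rather than computational: one must realize that it is $X^n_{(k+1)}$ (not merely $X^n_{(1)}$) that must be pinned near $v^*$, and that the relevant threshold-exceedance probability $q$ is a fixed positive constant, so that the hypothesis $k = o(n)$ feeds directly into the binomial tail bound of Lemma~\ref{binomial_bound}. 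The degenerate cases (point mass, or all-zero draws) are dispatched by the reduction to the event $A_n$ and by lower-bounding the complementary contribution by $0$.
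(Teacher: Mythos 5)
Your proposal is correct and follows essentially the same route as the paper: both pin $X^n_{(k+1)}$ near the right endpoint of the support by observing that the number of draws exceeding the threshold (endpoint minus $\epsilon$) is binomial with mean a constant fraction of $n$, which dwarfs $k = o(n)$, and then lower-bound the ratio by $(v^*-\epsilon)/v^*$ times the probability of that event. The only cosmetic differences are that the paper applies a Chernoff bound directly (with threshold $\epsilon n/2$) where you invoke Lemma~\ref{binomial_bound}, and that you additionally dispatch the degenerate point-mass-at-zero case explicitly.
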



\begin{proof}
The end of support $H$ can always be chosen such that $\Pr_{v\sim \D}[v \in [0,H]] =1$, and, $\forall \delta>0$, $\exists \epsilon > 0$ such that $\Pr_{v\sim\D}[v  \geq H-\delta] = \epsilon$.

Given the $n$ i.i.d. draws $X_1\ldots,X_n$ be $n$ from $\D$, let $Z_n =1$ if $X_n \geq H-\delta$ and $Z_n=0$ otherwise. 
By Chernoff bounds, it follows that as $n\to\infty$, the probability $\Pr[\text {At least } \frac{\epsilon n}{2} \text{ out of } X_1,\ldots,X_n \text{ are at least} \\
H-\delta] \rightarrow 1$. Here $\frac{\epsilon n}{2}$ is an arbitrary choice, and $\epsilon n (1-\epsilon')$ for any constant $0 < \epsilon'<1$ works.

Hence for any $\delta>0$,
\begin{align*}
\mE \left[\frac{X_{(k+1)}^n}{X_{(1)}^n} \right] &\geq \frac{H-\delta}{H}\Pr[X_{(k+1)}^n \geq H-\delta] 
\geq \frac{H-\delta}{H}\Pr[X_{(\frac{\epsilon n}{2})}^n \geq H-\delta] \rightarrow \frac{H-\delta}{H}.
\end{align*}
As $\delta \to 0$, we have $\frac{H-\delta}{H} \to 1$, completing the proof.
\end{proof}

\begin{theorem}
For any bounded support distribution $\D$, and $k=o(np)$, $ \lim_{n\rightarrow \infty}\EE[\delta_{\max}] =0$.
\end{theorem}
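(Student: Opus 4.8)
The plan is to combine the two ingredients already developed in the paper: the decomposition inequality~(\ref{firstineq}), the binomial tail bound of Lemma~\ref{binomial_bound}, and the bounded-support result of Lemma~\ref{lem:bounded}. The goal is to show that $\EE[\delta_{\max}] = 1 - \EE_{m\sim B(n,p)}\left[X^m_{(k+1)}/X^m_{(1)}\right] \to 0$, which is equivalent to showing that $\EE_{m\sim B(n,p)}\left[X^m_{(k+1)}/X^m_{(1)}\right] \to 1$. Since the ratio of order statistics is always at most $1$, it suffices to establish a matching lower bound approaching $1$.

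First I would fix an intermediate index $i$ to plug into~(\ref{firstineq}). The natural choice is one that grows slowly enough to satisfy the hypothesis $i = o(np)$ of Lemma~\ref{binomial_bound}, so that $\Pr[m > i] \to 1$, yet fast enough that $k = o(i)$, so that when we apply Lemma~\ref{lem:bounded} with $i$ in place of $n$ we still have $k = o(i)$ and obtain $\EE\left[X^i_{(k+1)}/X^i_{(1)}\right] \to 1$. Because $k = o(np)$, such an $i$ exists: for instance one can take $i = \sqrt{k \cdot np}$ (or more robustly $i = \lfloor \sqrt{np}\cdot \max\{k,1\}^{1/2}\rfloor$), which is $o(np)$ and simultaneously dominates $k$. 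I would verify both asymptotic relations for the chosen $i$.

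With this $i$ in hand, inequality~(\ref{firstineq}) gives
\begin{equation*}
\EE_{m\sim B(n,p)}\left[\frac{X^m_{(k+1)}}{X^m_{(1)}}\right] \geq \Pr[m > i]\cdot \EE\left[\frac{X^i_{(k+1)}}{X^i_{(1)}}\right].
\end{equation*}
Taking $n \to \infty$, the first factor tends to $1$ by Lemma~\ref{binomial_bound} (since $i = o(np)$), and the second factor tends to $1$ by Lemma~\ref{lem:bounded} (since $i \to \infty$ and $k = o(i)$). Hence the product tends to $1$, and combined with the trivial upper bound of $1$ we conclude $\EE_{m\sim B(n,p)}\left[X^m_{(k+1)}/X^m_{(1)}\right] \to 1$, i.e.\ $\EE[\delta_{\max}] \to 0$.

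The only genuinely delicate point is the choice of the interpolating index $i$ and confirming that the two requirements on it are mutually compatible; everything else is a direct substitution into results already proved. I expect this compatibility check to be the main (though still routine) obstacle: one must be careful when $k = \Theta(1)$, where $k = o(i)$ just needs $i \to \infty$, versus when $k$ grows with $n$ but is still $o(np)$, where $i$ must be squeezed strictly between $k$ and $np$ in growth rate. A clean way to handle both cases uniformly is the geometric-mean choice $i = \lfloor\sqrt{\max\{k,1\}\cdot np}\rfloor$, for which $k/i \to 0$ and $i/(np) \to 0$ follow immediately from $k/(np)\to 0$.
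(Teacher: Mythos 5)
Your proof is correct and follows exactly the paper's own argument: combining inequality~(\ref{firstineq}) with Lemmas~\ref{binomial_bound} and~\ref{lem:bounded} via an intermediate index $i$ squeezed between $k$ and $np$. Your explicit geometric-mean choice $i = \lfloor\sqrt{\max\{k,1\}\cdot np}\rfloor$ just makes concrete the existence claim the paper leaves implicit.
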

\begin{proof}
Combine inequality (\ref{firstineq}) with Lemmas~\ref{binomial_bound} and~\ref{bounded_dist}. In other words, for any $k=o(np)$ we can always find an $i$ s.t. $k = o(i)$ and $i = o(np)$. Use that $i$ in Lemma~\ref{binomial_bound}, and use that $i$ instead of $n$ in Lemma~\ref{bounded_dist}.
\end{proof}
\section{Median and its relation to manipulation susceptibility}\label{sec:unbounded}
In this section \textcolor{teal}{we develop our main technical insights. We} focus on the case of $k=1$ for clarity of presentation, although the results readily carry over to $k=\Theta(1)$.

\begin{definition}[Median of maximum]\label{def:tn}
Let $t_r \in \RR_{\geq 0}$ be a median of the maximum of $r$ draws from the distribution $\D$, defined as, the smallest real number s.t.  
$\Pr[X_1,\ldots,X_r \leq t_r] = \frac{1}{2}.$
\end{definition}

We now establish in Lemma~\ref{find_counterexamples} a sufficient condition for bid manipulation using $t_r$. Later in Lemma~\ref{tn-condition} we establish that almost the same condition, with a minor strengthening, is also a necessary condition.
\begin{lemma}\label{find_counterexamples}
If $\D$ is a distribution for which $\lim_{n\rightarrow \infty}\frac{t_n}{t_{2n}}<1$, then 
$\lim_{n \rightarrow \infty} \mE\left[\frac{X_{(2)}^n}{X_{(1)}^n} \right]< 1$.
\end{lemma}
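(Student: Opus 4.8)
The plan is to produce, for all large $n$, a single event of \emph{constant positive probability} on which the largest draw overshoots the second-largest by a definite factor, and then to bound the expected ratio by conditioning on that event. The right way to manufacture such a gap is to test the top two order statistics against the two medians $t_n$ and $t_{2n}$. By Definition~\ref{def:tn} we have $F(t_n)^n = F(t_{2n})^{2n} = \tfrac12$, hence $F(t_n) = 2^{-1/n}$ and $F(t_{2n}) = 2^{-1/(2n)}$; in particular $t_n \le t_{2n}$, and the hypothesis reads $\rho := \lim_{n\to\infty} t_n/t_{2n} < 1$.

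I would introduce the two events $B := \{X_{(1)}^n > t_{2n}\}$ and $A := \{X_{(2)}^n \le t_n\}$. The first is exact for every $n$: $\Pr[B] = 1 - F(t_{2n})^n = 1 - 2^{-1/2} = 1 - 1/\sqrt2$. For the second, $A$ says that at most one of the $n$ draws exceeds $t_n$; writing $q_n := 1 - F(t_n) = 1 - 2^{-1/n}$, the number of such draws is $\mathrm{Bin}(n,q_n)$, so
\[
\Pr[A] = (1-q_n)^n + n\,q_n (1-q_n)^{n-1} = \tfrac12 + \tfrac12\cdot\frac{n\,q_n}{1-q_n}.
\]
Since $n q_n = n(1-2^{-1/n}) \to \ln 2$ and $1 - q_n = 2^{-1/n}\to 1$, this gives $\Pr[A]\to \tfrac{1+\ln 2}{2}\approx 0.847$.

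Now combine the two events by a Bonferroni bound: $\Pr[A\cap B] \ge \Pr[A]+\Pr[B]-1$, whose limit is $\tfrac{1+\ln2}{2} - \tfrac1{\sqrt2}\approx 0.139 > 0$, so $\beta := \liminf_n \Pr[A\cap B] > 0$. On $A \cap B$ we have $X_{(2)}^n \le t_n$ and $X_{(1)}^n > t_{2n}$, hence $X_{(2)}^n/X_{(1)}^n \le t_n/t_{2n}$, whereas off this event the ratio is trivially at most $1$. Splitting the expectation accordingly yields
\[
\mE\!\left[\frac{X_{(2)}^n}{X_{(1)}^n}\right] \le \Pr[A\cap B]\cdot\frac{t_n}{t_{2n}} + \big(1-\Pr[A\cap B]\big) = 1 - \Pr[A\cap B]\left(1-\frac{t_n}{t_{2n}}\right),
\]
and letting $n\to\infty$ gives $\limsup_n \mE[X_{(2)}^n/X_{(1)}^n] \le 1 - \beta(1-\rho) < 1$, as claimed.

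The genuine content is the choice of the pair $(t_n,t_{2n})$: $t_n$ pins where the bulk of the mass sits, so that typically at most one draw escapes above it, while $t_{2n}$ sits far enough above $t_n$---by exactly the factor the hypothesis controls---that this single escaping draw still clears it with constant probability. Once the events $A$ and $B$ are written down, the rest reduces to the elementary binomial identity for $\Pr[A]$ together with the expansion $1-2^{-1/n}\sim (\ln 2)/n$, which is routine rather than an obstacle. The only point needing a word of care is that the medians satisfy $F(t_r)^r = \tfrac12$ with equality, which holds for continuous $\D$ (the case relevant for constructing counterexamples); atoms would turn these into two-sided inequalities but would change neither the constants nor the conclusion.
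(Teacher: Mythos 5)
Your proof is correct, and it rests on the same key insight as the paper's: manufacture a constant-probability event on which $X_{(2)} \le t_n$ while $X_{(1)} > t_{2n}$, so the ratio is at most $t_n/t_{2n}$, then split the expectation. Where you genuinely differ is in how that constant probability is established. The paper passes to $2n$ draws and bounds the sub-event ``exactly one of $X_1,\ldots,X_{2n}$ exceeds $t_{2n}$ and all others are at most $t_n$'' by writing it as a disjoint union over which index is the large one, using independence, the identity $\Pr[X_1,\ldots,X_{2n}\le t_n]=\frac14$, and a union bound against $\Pr[X_{(1)}^{2n}>t_{2n}]=\frac12$; this gives the clean bound $\frac18$ for \emph{every} $n$, with no asymptotic computation. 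You instead stay with $n$ draws, evaluate $\Pr[X_{(1)}^n>t_{2n}]=1-2^{-1/2}$ exactly and $\Pr[X_{(2)}^n\le t_n]\to\frac{1+\ln 2}{2}$ via the binomial identity and $n(1-2^{-1/n})\to\ln 2$, and intersect the two events by Bonferroni. Your route buys a bound on $\mE\left[\frac{X_{(2)}^n}{X_{(1)}^n}\right]$ for the full sequence of $n$, so you avoid the paper's final, slightly glossed-over step of transferring a bound proved along the even subsequence back to all $n$, and your asymptotic constant ($\approx 0.139$) is even slightly larger than $\frac18$; the cost is that your constant is only a $\liminf$ and requires a limit computation, whereas the paper's $\frac18$ is purely algebraic and uniform in $n$. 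Your closing caveat about exact medians is fine: the paper's Definition~\ref{def:tn} and its own proof (which uses $\Pr[X_1,\ldots,X_n\le t_n]=\frac12$ and $\Pr[X_{(1)}^{2n}>t_{2n}]=\frac12$ with equality) make the same implicit continuity assumption, so this is not a gap specific to your argument.
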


\begin{proof}
By definition of $t_n$,
$$\Pr[X_1, \ldots, X_n \leq t_n] = \Pr[X_1 \lor  \ldots  \lor X_n > t_n] = \frac{1}{2}.$$
Therefore,
\begin{align}
\Pr[X_1,\ldots,X_{2n} \leq t_{n}]
&= \Pr[(X_1,\ldots,X_{n} \leq t_{n}) \land (X_{n + 1},\ldots, X_{2n} \leq t_{n})] \nonumber\\ 
\label{2n_inequality}
&=(\Pr[X_1,\ldots,X_{n} \leq t_{n}])^2 = \frac{1}{4} 
\end{align}

Let $A_n$ be the event that exactly one of $X_1,\ldots,X_{2n}$ is strictly greater than $t_{2n}$ and all other random variables at at most $t_n$. By definition of $A_n$, it follows that $\Pr\left[ \left[\frac{X_{(2)}^{2n}}{X_{(1)}^{2n}} \right] \leq \frac{t_{n}}{t_{2n}} \right] \geq \Pr[A_n]$. Let $X_{-i} = \{X_1,\ldots,X_{2n}\} \setminus X_i$. We now lower bound the probability of event $A_n$ using inequality~\ref{2n_inequality} as follows.
\begin{align}
\label{eq:const_prob_median}
\Pr\left[ \left[\frac{X_{(2)}^{2n}}{X_{(1)}^{2n}} \right] \leq \frac{t_{n}}{t_{2n}} \right] &\geq \Pr[A_n] \nonumber\\
&=
\sum_{i=1}^{2n}\Pr[X_{-i} \leq t_{n} | X_i > t_{2n}]\Pr[X_i > t_{2n}] \nonumber\\
&= \sum_{i=1}^{2n}\Pr[X_{-i} \leq t_{n}]\Pr[X_i > t_{2n}] \nonumber\\
&\geq \frac{1}{4} \sum_{i=1}^{2n} \Pr[X_i > t_{2n}] \nonumber \\
&\geq \frac{1}{4} \Pr[X_{(1)}^{2n} > t_{2n}] =\frac{1}{8}.
\end{align}

So if the distribution $\D$ is such that $\lim_{n \rightarrow \infty} \frac{t_{n}}{t_{2n}} < 1$, then inequality~\eqref{eq:const_prob_median} shows that with a constant probability the ratio of second- and first-order statistics is strictly smaller than 1. Thus, the expectation of the ratio is also strictly smaller than $1$, i.e., $\lim_{n \rightarrow \infty} \mE\left[\frac{X_{(2)}^{2n}}{X_{(1)}^{2n}} \right]< 1$. This is equivalent to $\lim_{n \rightarrow \infty} \mE\left[\frac{X_{(2)}^{n}}{X_{(1)}^{n}} \right]< 1$.
\end{proof}

We use Lemma~\ref{find_counterexamples} to show $\lim_{n\rightarrow \infty} \EE[\delta_{\max}(v)] > 0$ for many distributions, including $\alpha$-strongly regular distributions for every $\alpha < 1$. We do so in Section~\ref{sec:alpha}. 

\textcolor{teal}{We now move on} to proving an almost-converse of Lemma~\ref{find_counterexamples}, namely, Lemma~\ref{tn-condition}. The proof of Lemma~\ref{tn-condition} crucially uses Claim~\ref{char_deltamax} which we state and prove first.

\begin{claim}\label{char_deltamax}
If $\lim_{n\rightarrow \infty}\mE \left[\frac{X_{(2)}^n}{X_{(1)}^n} \right] \neq 1$, $\exists$ a subsequence $(a_{n(t)},b_{n(t)})$ with $a_{n(t)} < b_{n(t)}$ and constants $c>0$, $d<1$, such that, $\forall t \in \NN$  we have
$\Pr[X_{(2)}^{n(t)}<a_{n(t)} \wedge X_{(1)}^{n(t)}>b_{n(t)}] \geq c$ and $\frac{a_{n(t)}}{b_{n(t)}}\leq d$.
\end{claim}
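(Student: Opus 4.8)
The plan is to first turn the hypothesis about the \emph{expectation} of the ratio into a statement holding with constant \emph{probability}, and then to locate deterministic thresholds that witness the resulting multiplicative gap between the top two order statistics.

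First I would reduce to a constant-probability statement. Write $R_n = X^n_{(2)}/X^n_{(1)} \in [0,1]$. Since $R_n \le 1$, the failure of $\EE[R_n]$ to converge to $1$ gives $\liminf_n \EE[R_n] < 1$, so there are $\eta > 0$ and a subsequence $n(t)$ with $\EE[R_{n(t)}] \le 1 - \eta$. Applying Markov's inequality to $1 - R_{n(t)} \in [0,1]$ yields $\Pr[R_{n(t)} \le d_0] \ge c_0$ with $d_0 := 1 - \eta/2$ and $c_0 := \eta/2$. Thus, along the subsequence, $\Pr[X^{n(t)}_{(2)} \le d_0 X^{n(t)}_{(1)}] \ge c_0$: with constant probability the top two order statistics already differ by a fixed multiplicative factor. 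What remains is to replace the \emph{random} thresholds $X_{(1)}$ and $d_0 X_{(1)}$ by \emph{deterministic} ones $b_{n(t)}$ and $a_{n(t)} = d_0 b_{n(t)}$.

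The crux is this localization step. Fixing $n$ in the subsequence and suppressing it, I would parametrise by the survival level $S = \Pr[X > x]$, writing $x(S)$ for the corresponding quantile, and compute $\Pr[X_{(2)} \le d_0 X_{(1)}] = \int_0^1 n\,(1 - q(S))^{n-1}\,dS$, where $q(S) := \Pr[X > d_0\,x(S)] \ge S$ measures the mass above the shrunk threshold $d_0 x(S)$. I would then split $[0,1]$ into $(0,\epsilon/n]$, $(\epsilon/n, K/n]$, and $(K/n, 1]$. On the first the integrand integrates to at most $\epsilon$ (using $(1-q)^{n-1}\le 1$); on the last, using $q(S) \ge S$ and $(1-S)^{n-1} \le e^{-(n-1)S}$, it integrates to $O(e^{-K})$. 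Choosing the constants $\epsilon, K$ (depending only on $c_0$) makes both $< c_0/2$, so the middle interval carries mass at least $c_0/2$. Now if $q(S) \ge \theta/n$ held for \emph{every} $S \in (\epsilon/n, K/n]$, the middle integral would be at most $2K e^{-\theta}$, and taking $\theta$ a large enough constant contradicts $c_0/2$. Hence some $S^\star \in (\epsilon/n, K/n]$ has $q(S^\star) < \theta/n$.

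Finally I would set $b_n := x(S^\star)$ and $a_n := d_0 b_n$, so $a_n/b_n = d_0 =: d < 1$. By construction $\Pr[X > b_n] = S^\star \ge \epsilon/n$ while $\Pr[X > a_n] = q(S^\star) < \theta/n$, and since $\Pr[X_{(2)} < a_n \wedge X_{(1)} > b_n] = n\,\Pr[X > b_n]\,\Pr[X < a_n]^{\,n-1} = n S^\star (1 - q(S^\star))^{n-1} \ge \epsilon\,(1-\theta/n)^{n-1} \to \epsilon e^{-\theta}$, this probability exceeds a positive constant $c$ for all large $n$ in the subsequence, giving the desired $(a_{n(t)}, b_{n(t)})$, $c$, and $d$. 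The main obstacle is precisely the localization in the third paragraph: the event $\{X_{(2)} \le d_0 X_{(1)}\}$ is scale-free, so its mass can be smeared over many multiplicative scales of $X_{(1)}$ (think heavy tails), and a naive dyadic pigeonhole over infinitely many bands fails; the resolution is to search instead for a single deterministic level, which the passage to survival coordinates pins to the ``max scale'' $S = \Theta(1/n)$, where the hypothesis forces little mass into the band $(d_0 b_n, b_n)$ — exactly the flatness that makes the single-scale gap event have constant probability. I would also take mild care with atoms when using $x(S)$ and the product formula for $\Pr[X_{(2)} < a \wedge X_{(1)} > b]$; as every bound carries constant slack this can be absorbed, or continuity assumed without loss.
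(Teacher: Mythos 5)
Your proof is correct (up to two easily repaired details noted below), but it takes a genuinely different route from the paper's. The paper proves the claim by contradiction: it extracts a subsequence with $\mE\left[X^{n(t)}_{(2)}/X^{n(t)}_{(1)}\right]\leq 1-\epsilon$, supposes that every candidate sequence $(a_{n(t)},b_{n(t)})$ has either $\Pr[X^{n(t)}_{(2)}<a_{n(t)}\wedge X^{n(t)}_{(1)}>b_{n(t)}]\to 0$ or $a_{n(t)}/b_{n(t)}\to 1$, and then asserts that ``in either case'' the expectation would have to exceed $1-\epsilon$ --- an implication that is stated but never actually carried out. Your argument replaces this informal limiting step with an explicit construction: after the Markov-type reduction to $\Pr[X^{n(t)}_{(2)}\leq d_0 X^{n(t)}_{(1)}]\geq c_0$, you pass to survival coordinates, write this probability as $\int_0^1 n(1-q(S))^{n-1}\,dS$, show that its mass must concentrate in the band $S\in(\epsilon/n,K/n]$, and pigeonhole a \emph{deterministic} level $S^\star$ there with $q(S^\star)<\theta/n$, which yields thresholds $b_{n}=x(S^\star)$, $a_{n}=d_0b_{n}$ and the explicit bound $nS^\star(1-q(S^\star))^{n-1}\geq \epsilon(1-\theta/n)^{n-1}$. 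What your route buys: it is fully rigorous precisely where the paper's proof is weakest, it produces explicit constants $(c,d)$, and it isolates the quantile scale $S=\Theta(1/n)$ --- exactly the scale at which the median-of-maximum $t_n$ of Definition~\ref{def:tn} lives, so it meshes naturally with how the claim is consumed in Lemma~\ref{tn-condition}. What the paper's route buys is essentially only brevity.

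Two repairs. First, if each boundary region is only bounded by $c_0/2$, you can conclude merely that the middle band has positive mass, not mass at least $c_0/2$; bound each boundary piece by $c_0/4$ instead (the constants $\epsilon$ and $K$ are yours to choose), since the pigeonhole step genuinely needs a constant lower bound on the middle integral. Second, the identities $\Pr[X>x(S)]=S$, $\Pr[X_{(2)}\leq d_0X_{(1)}]=\int_0^1 n(1-q(S))^{n-1}\,dS$, and $\Pr[X_{(2)}<a_n\wedge X_{(1)}>b_n]=n\Pr[X>b_n]\Pr[X<a_n]^{n-1}$ all use continuity (no atoms); this is consistent with the paper, which invokes the claim only for continuous (MHR) distributions via Lemma~\ref{tn-condition}, but you should state the assumption explicitly because the claim itself is phrased for general distributions. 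Relatedly, your lower bound $\epsilon(1-\theta/n)^{n-1}\to\epsilon e^{-\theta}$ holds only for sufficiently large $t$ along the subsequence, while the claim quantifies over all $t\in\NN$; re-index the subsequence to discard the finitely many small $t$.
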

\begin{proof}
For any real-valued non-negative random variables $X$ and $Y$ satisfying $\EE\left[\frac{X}{Y}\right] <1$, there must exist positive real numbers $a,b$ such that $a < b$ and $\Pr[X<a \ \land \ Y>b]>0$. 
By definition of limit, since
$\lim_{n\rightarrow \infty}\mE \left[\frac{X_{(2)}^n}{X_{(1)}^n} \right] \neq 1$ and $\mE \left[\frac{X_{(2)}^n}{X_{(1)}^n} \right] \in [0,1]$ $\forall n\in \NN$,
there exists some ${n(t)}_{\{t\in \NN\}}$, subsequence of $\NN$ and $\epsilon>0$ such that for all $t\in \NN$, $\EE \left[\frac{X^{n(t)}_{(2)}}{X^{n(t)}_{(1)}} \right]\leq 1-\epsilon$. Now assume by contradiction to the claim statement that for all corresponding sequences of upper and lower bounds $\{(a_{n(t)},b_{n(t)})\}$ satisfying $a_{n(t)} < b_{n(t)}$, either $\lim_{t\rightarrow \infty} \Pr[X^{n(t)}_{(2)}<a_{n(t)} \land X^{n(t)}_{(1)}>b_{n(t)}] = 0$ or $\lim_{t\rightarrow \infty} \frac{a_{n(t)}}{b_{n(t)}} \rightarrow 1$. In either case, this would imply $\lim_{t\rightarrow \infty}\EE \left[\frac{X^{n(t)}_{(2)}}{X^{n(t)}_{(1)}} \right]>1-\epsilon$, which is a contradiction to what we just established.
\end{proof}


\begin{lemma}\label{tn-condition}
For i.i.d random variables $X_1,\ldots,X_n$ drawn from some distribution, if for any constant $l\in \NN$ we have $\lim_{n \rightarrow 
\infty}\frac{t_{n}}{t_{nl}} =1$, then 
$\lim_{n\rightarrow \infty}\mE \left[\frac{X_{(2)}^n}{X_{(1)}^n} \right]=1$.
\end{lemma}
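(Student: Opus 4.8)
The plan is to prove the contrapositive: assuming $\lim_{n\rightarrow\infty}\mE\left[X_{(2)}^n/X_{(1)}^n\right]\neq 1$, I would produce a constant and a subsequence along which a median ratio $t_m/t_{ml}$ stays bounded away from $1$, contradicting the hypothesis. The entry point is Claim~\ref{char_deltamax}, which hands me a subsequence $n(t)$, thresholds $a_{n(t)}<b_{n(t)}$, and constants $c>0$, $d<1$ with $\Pr[X_{(2)}^{n(t)}<a_{n(t)}\wedge X_{(1)}^{n(t)}>b_{n(t)}]\ge c$ and $a_{n(t)}/b_{n(t)}\le d$. Calling this the ``gap event'', the whole argument reduces to sandwiching the two thresholds between medians-of-the-maximum: I would show $b_{n(t)}\le t_{n(t)\,l}$ for a large constant $l$, and $a_{n(t)}\ge t_{\lceil n(t)/l'\rceil}$ for a large constant $l'$, both depending only on $c$. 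Dropping the index $t$ for brevity, this forces $t_{\lceil n/l'\rceil}/t_{nl}\ge a/b$, and since $a/b\le d<1$ I would contradict the hypothesis that this median ratio tends to $1$.

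The lever for both bounds is the elementary identity $F(t_r)^r=\tfrac12$, equivalently $\bar F(t_r)=1-2^{-1/r}$ with $\bar F=1-F$, so that $r\,\bar F(t_r)\to\ln 2$. Thus $t_r$ is \emph{exactly} the quantile at which the expected number of draws exceeding it is $\Theta(1)$; this is the key observation that makes medians-of-maxima the right comparison object. For the upper bound on $b$: the gap event forces $X_{(1)}^n>b$, so $1-F(b)^n\ge c$, i.e.\ $F(b)^n\le 1-c$. Choosing $l$ with $2^{-1/l}>1-c$ (possible since $2^{-1/l}\uparrow 1$) gives $F(b)^n\le 1-c<2^{-1/l}=F(t_{nl})^n$, whence $F(b)<F(t_{nl})$ and therefore $b<t_{nl}$ by monotonicity of $F$.

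For the lower bound on $a$: the gap event forces $X_{(2)}^n<a$, i.e.\ at most one of the $n$ draws is $\ge a$, so the count $N_a\sim\mathrm{Bin}(n,q)$ with $q=\Pr[X\ge a]$ obeys $\Pr[N_a\le 1]\ge c$. Using $\Pr[N_a\le 1]=(1-q)^n+nq(1-q)^{n-1}\le e^{-nq/2}(1+nq)$ for $n\ge 2$, a constant lower bound $c$ caps the mean: $nq\le\Lambda_c$ for a constant $\Lambda_c$ depending only on $c$. Now pick $l'$ with $l'\ln 2>\Lambda_c$. Writing $m=\lceil n/l'\rceil$, we have $n\,\bar F(t_m)=(n/m)\cdot m\bar F(t_m)\to l'\ln 2>\Lambda_c\ge nq\ge n\,\bar F(a)$, so $\bar F(a)<\bar F(t_m)$ for large $n$, hence $a>t_m$ by monotonicity of $\bar F$.

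Combining the two bounds, $d\ge a/b\ge t_{m}/t_{nl}$ along the subsequence, with $m=\lceil n/l'\rceil$. The indices $m$ and $nl$ differ only by the fixed factor $\approx ll'$, so monotonicity of $r\mapsto t_r$ together with the hypothesis squeezes the ratio to $1$: since $nl\le m(ll'+1)$ for all large $n$, we get $1\ge t_m/t_{nl}\ge t_m/t_{m(ll'+1)}\to 1$, applying the hypothesis with the constant $ll'+1$. This yields $d\ge 1$, contradicting $d<1$ and proving the lemma. I expect the main obstacle to be the two quantitative translations from constant-probability order-statistic events to statements about $\bar F(a)$ and $F(b)$ at the median levels — in particular making the binomial estimate $\Pr[N_a\le1]\ge c\Rightarrow nq=O(1)$ rigorous and uniform along the subsequence, and cleanly handling monotonicity and the rounding $\lceil n/l'\rceil$ when moving between the indices $n$, $nl$, and $m$. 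The conceptual heart, however, is the observation that medians-of-maxima are $\Theta(1/n)$-tail quantiles, so a constant-probability gap can open only between indices separated by a \emph{constant} factor — precisely what $t_n/t_{nl}\to 1$ forbids.
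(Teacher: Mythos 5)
Your proof is correct and follows essentially the same route as the paper's: both argue the contrapositive through Claim~\ref{char_deltamax}, then sandwich the thresholds between medians-of-maxima at constant-factor-separated indices (your $a > t_{\lceil n/l'\rceil}$ and $b < t_{nl}$ play the role of the paper's $a_n > t_n$ and $b_n < t_{nl^2}$), so that $d \geq a/b$ is squeezed against a median ratio that the hypothesis forces to $1$. The only differences are cosmetic: the paper establishes the sandwich by bounding $\Pr[X_{(2)}^{nl} \leq t_n]$ with a halving union bound and computing $\Pr[X_{(1)}^{nl} > t_{nl^2}]$ exactly, whereas you use the CDF comparison $F(b)^n \leq 1-c < 2^{-1/l}$ and a binomial-mean estimate, handling the index rounding somewhat more carefully.
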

\begin{proof}
We show that whenever for any constant $l\in \NN$, \textcolor{teal}{it holds that} $\lim_{n \rightarrow 
\infty}\frac{t_{n}}{t_{nl}} =1$, a sequence as described in Claim~\ref{char_deltamax} does not exist. Thus, by the contra-positive of Claim~\ref{char_deltamax} the lemma is proved.

To show this, assume by contradiction that a sequence as in Claim~\ref{char_deltamax} exists, i.e., there exists $(a_n,b_n)$, $c>0$, $d<1$ as in Claim~\ref{char_deltamax}. Let $l$ be a constant s.t.\\
$\Pr[X_{(2)}^{nl} \leq t_{n}]<c$ and $\Pr[X_{(1)}^{nl}>t_{nl^2}]<c$. Such $l$ exists since
\begin{align}
\label{eq:nl2}
\Pr[X_{(2)}^{nl}\leq t_{n}] &\leq
\Pr[(X_1,\ldots,X_{\frac{nl}{2}}) \leq t_{n} \lor (X_{\frac{nl}{2}+1},\ldots,X_{nl}) \leq t_{n}]\nonumber\\
&\leq 2\Pr[(X_1,\ldots,X_{\frac{nl}{2}}) \leq t_{n}] \nonumber\\
&=  (\frac{1}{2})^{\frac{l}{2}-1} \text{ (by applying Definition~\ref{def:tn})}.
\end{align}
Similarly, by Definition~\ref{def:tn},
$\Pr[X_{(1)}^{nl^2}>t_{nl^2}]=\frac{1}{2}$, and likewise
$\Pr[(X_1,\ldots,X_{nl^2}) \leq t_{nl^2}] = \frac{1}{2}$ 
\begin{align}
\label{eq:nl1}
&\Rightarrow \Pr[(X_1,\ldots,X_{nl}) \leq t_{nl^2}] = (\frac{1}{2})^{\frac{1}{l}}\nonumber\\
&\Rightarrow \Pr[X_{(1)}^{nl}>t_{nl^2}] = 1-(\frac{1}{2})^{\frac{1}{l}}.
\end{align}
As the RHS of both inequalities~\eqref{eq:nl1} and~\eqref{eq:nl2} go to $0$ for large $l$, the only way the first part of our contrary assumption can be true, namely that $\Pr[X_{(2)}^{nl}<a_{n} \wedge X_{(1)}^{nl}>b_{n}] \geq c$ can be true, is that 
$a_n > t_{n}$ and $b_n < t_{nl^2}$. But when $n\to\infty$, this is impossible because, by the other part of our contrary assumption we have $\frac{a_{n}}{b_{n}} \leq d $, whereas
$\lim_{n\to\infty} \frac{t_{n}}{t_{nl^2}} = 1$ by what is given in the lemma statement.
\end{proof}

We use Lemma~\ref{tn-condition} to show $\lim_{n\rightarrow \infty} \EE[\delta_{\max}(v)] = 0$ for all MHR distributions. We do so in Section~\ref{sec:MHR}. 

\section{MHR and $\alpha$-strongly regular distributions}\label{sec:MHRRegular}

\johcomm{In this section we make use of the insights developed in Section~\ref{sec:unbounded} to analyze the behavior of $\lim_{n\rightarrow \infty} \EE[\delta_{\max}(v)]$ for MHR and $\alpha$-strongly regular distributions. We observe that for $k=1$,  \textcolor{teal}{or more generally $k = \Theta(1)$}, MHR suffices to get no price manipulability while strong regularity does not. However, this guarantee no longer holds for MHR when we increase $k$ from constant to $\sqrt{n}$.}

\subsection{$\alpha$-strongly regular distributions}\label{sec:alpha}
\textcolor{teal}{First we} use Lemma~\ref{find_counterexamples} to show $\lim_{n\rightarrow \infty} \EE[\delta_{\max}(v)] > 0$ for $\alpha$-strongly regular distributions for every $\alpha < 1$. 
\begin{theorem}\label{lem:alpha_regular}
For every $\alpha<1$, there exists an $\alpha$-strongly regular distribution for which, with $p=1$ and $k=1$,
$\lim_{n\rightarrow \infty} \EE[\delta_{\max}(v)] > 0$.
\end{theorem}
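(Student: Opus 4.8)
The plan is to exhibit, for each $\alpha \in [0,1)$, a concrete \emph{unbounded}-support distribution that is exactly $\alpha$-strongly regular, and then verify that it satisfies the hypothesis $\lim_{n\to\infty} t_n/t_{2n} < 1$ of Lemma~\ref{find_counterexamples}. Since for $k=1$ and $p=1$ we have $\EE[\delta_{\max}] = 1 - \mE[X_{(2)}^n/X_{(1)}^n]$, the conclusion of Lemma~\ref{find_counterexamples} (that $\lim \mE[X_{(2)}^n/X_{(1)}^n] < 1$) immediately yields $\lim_{n\to\infty}\EE[\delta_{\max}] > 0$. So the whole task reduces to finding the right $\D$ and checking one limit.

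The natural candidate is a Pareto tail tuned to the $\alpha$-regularity boundary. I would set $\beta = 1/(1-\alpha)$ (note $\beta \in (1,\infty)$ for $\alpha \in [0,1)$) and take $F(v) = 1 - v^{-\beta}$ for $v \geq 1$. The first step is to verify $\alpha$-strong regularity: with $f(v) = \beta v^{-\beta-1}$ the virtual value is $\phi(v) = v - (1-F(v))/f(v) = v(1 - 1/\beta)$, so $\phi(v') - \phi(v) = (1-1/\beta)(v'-v) = \alpha(v'-v)$ and the defining inequality $\phi(v') - \phi(v) \geq \alpha(v'-v)$ holds with equality. Thus this distribution is exactly $\alpha$-strongly regular and has unbounded support, as required.

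Next I would compute the median-of-maximum $t_n$. By Definition~\ref{def:tn}, $F(t_n)^n = \tfrac{1}{2}$, so $F(t_n) = 2^{-1/n}$ and hence $t_n^{-\beta} = 1 - 2^{-1/n}$, giving $t_n = (1 - 2^{-1/n})^{-1/\beta}$. Therefore
$$\frac{t_n}{t_{2n}} = \left(\frac{1 - 2^{-1/(2n)}}{1 - 2^{-1/n}}\right)^{1/\beta}.$$
Using $1 - 2^{-1/n} \sim (\ln 2)/n$ and $1 - 2^{-1/(2n)} \sim (\ln 2)/(2n)$ as $n\to\infty$, the bracketed ratio tends to $\tfrac{1}{2}$, so $\lim_{n\to\infty} t_n/t_{2n} = 2^{-1/\beta} < 1$. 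Applying Lemma~\ref{find_counterexamples} then finishes the argument.

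The computations are routine; the only real content is pinpointing the distribution, i.e.\ choosing the exponent $\beta = 1/(1-\alpha)$ so that $\D$ sits exactly on the $\alpha$-regular boundary while retaining a polynomially heavy, unbounded tail. I expect the only place needing care is the asymptotic handling of $t_n$ (and, if one is fussy, checking that $t_n$ is well-defined and that the $\sim$ estimates are uniform enough to pass to the limit). A reassuring sanity check is the behavior at the boundary: as $\alpha \uparrow 1$ we have $\beta \to \infty$ and the limiting ratio $2^{-1/\beta} \to 1$, so the manipulation gap shrinks continuously to zero, consistent with the fact that MHR ($\alpha=1$) distributions yield ratio $1$.
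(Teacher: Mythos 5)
Your proposal is correct and follows essentially the same route as the paper: the paper also exhibits the Pareto distribution $F(x)=1-x^{-d}$, computes $t_n=(1-2^{-1/n})^{-1/d}$, shows $\lim_{n\to\infty} t_n/t_{2n} = (1/2)^{1/d} < 1$ (via the substitution $z=2^{-1/n}$ and factoring $1-z$, rather than your asymptotic $1-2^{-1/n}\sim (\ln 2)/n$, but this is immaterial), and invokes Lemma~\ref{find_counterexamples}. The only cosmetic difference is that you tune the exponent to $\beta = 1/(1-\alpha)$ so the distribution is exactly $\alpha$-strongly regular, whereas the paper takes an integer exponent $d$ giving $(1-1/d)$-strong regularity, which suffices since this implies $\alpha$-strong regularity for all $\alpha \leq 1-1/d$.
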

\begin{proof}
Let distribution $\D$ be given by $\Pr_{v\sim D}[v\leq x] = 1-\frac{1}{x^d}$ for some $d \in \NN$. Such a $\D$ is $(1-\frac{1}{d})$-strongly regular.
We compute $\lim_{n \rightarrow \infty} \frac{t_{n}}{t_{2n}}$ and show that it is less than $1$. Applying Lemma~\ref{find_counterexamples}, this immediately completes the proof.

By the definition of $t_n$ (see Definition~\ref{def:tn}):

\begin{align*}
\Pr[X_1,\ldots,X_n<t_n] = \frac{1}{2}
&\Rightarrow \left(1-\frac{1}{(t_n)^d}\right)^n = \frac{1}{2} \\
&\Rightarrow 1-\left(\frac{1}{2}\right)^{\frac{1}{n}} = \frac{1}{(t_n)^d}\\
&\Rightarrow t_n = \frac{1}{\left( 1-\left(\frac{1}{2}\right)^{\frac{1}{n}}\right)^{\frac{1}{d}}}\\
&\Rightarrow \frac{t_{n}}{t_{2n}}
=\left(\frac{\left( 1-\left(\frac{1}{2}\right)^{\frac{1}{2n}}\right)}{\left( 1-\left(\frac{1}{2}\right)^{\frac{1}{n}}\right)}\right)^{\frac{1}{d}} 
\end{align*}
By substituting $z=\left(\frac{1}{2}\right)^{\frac{1}{n}}$:
\begin{align*}
\lim_{n\rightarrow \infty}\frac{t_n}{t_{2n}} &= \lim_{n\rightarrow \infty} \left(\frac{\left( 1-\left(\frac{1}{2}\right)^{\frac{1}{2n}}\right)}{\left( 1-\left(\frac{1}{2}\right)^{\frac{1}{n}}\right)}\right)^{\frac{1}{d}} \\
&=\lim_{z\rightarrow 1^-}\left(\frac{\left( 1-z^{\frac{1}{2}}\right)}{\left( 1-z\right)}\right)^{\frac{1}{d}} \\
& =\lim_{z\rightarrow 1^-}\left(\frac{1}{\left( 1+z^{\frac{1}{2}}\right)}\right)^{\frac{1}{d}}=\left(\frac{1}{2}\right)^\frac{1}{d} < 1.
\end{align*}
Since $p=1$, we have:
$$
\lim_{n\rightarrow \infty} \EE[\delta_{\max}]  = 1-
\lim_{n \rightarrow \infty} \mE\left[\frac{X_{(2)}^n}{X_{(1)}^n} \right].
$$
As discussed at the beginning of the proof, the result follows from applying Lemma~\ref{find_counterexamples}.
\end{proof}

\subsection{MHR distributions}\label{sec:MHR}
\textcolor{teal}{Next we} use Lemma~\ref{tn-condition} to show $\lim_{n\rightarrow \infty} \EE[\delta_{\max}(v)] = 0$ for MHR distributions. 
\textcolor{teal}{As in Section~\ref{sec:unbounded} we focus on the case $k = 1$, but note that the arguments readily extend to $k = \Theta(1)$.}

\textcolor{teal}{Our results build on the following lemma about exponential distributions, which are known to be extremal for the class of MHR distributions.}

\begin{lemma}\label{tn_exponential}
Consider the exponential distribution $\D$, with c.d.f $F(x)=1-e^{-cx}$ for a constant $c$.
For  $r\in (0,1)$, defining $s_n$ by \\
$\Pr[X_1,\ldots,X_n<s_n] = r$
implies 
$$\frac{s_n}{s_m} = \frac{\ln(1-r^{\frac{1}{n}})}{\ln(1-r^{\frac{1}{m}})}$$

Furthermore, for $l\in \NN$,
$$
\lim_{n \rightarrow \infty} \frac{\ln(1-r^{\frac{1}{n}})}{\ln(1-r^{\frac{1}{nl}})}=1.
$$
In particular, $$\lim_{n \rightarrow \infty} \frac{t_{n}}{t_{nl}} = 1.$$
\end{lemma}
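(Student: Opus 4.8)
The plan is to solve the defining equation in closed form and then reduce the stated limit to an elementary asymptotic estimate, so that the main identity falls out by cancellation and the limit follows from a single first-order expansion.

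First I would exploit the i.i.d.\ structure to turn the defining condition into algebra: since the $X_i$ are i.i.d., $\Pr[X_1,\ldots,X_n < s_n] = F(s_n)^n = (1-e^{-c s_n})^n = r$. Solving gives $1 - e^{-c s_n} = r^{1/n}$, hence $e^{-cs_n} = 1 - r^{1/n}$ and $s_n = -\tfrac{1}{c}\ln(1-r^{1/n})$; note $1-r^{1/n}\in(0,1)$ so the logarithm is negative and $s_n>0$. Forming the ratio $s_n/s_m$ the factor $-\tfrac{1}{c}$ cancels, which immediately yields the first displayed identity $s_n/s_m = \ln(1-r^{1/n})/\ln(1-r^{1/m})$.

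For the limit I would observe that as $n\to\infty$ we have $r^{1/n} = e^{(\ln r)/n}\to 1$, so $1-r^{1/n}\to 0^+$ and $\ln(1-r^{1/n})\to-\infty$, making the ratio an $\infty/\infty$ form. I would make this precise with the expansion $r^{1/n} = 1 + \tfrac{\ln r}{n} + O(1/n^2)$, giving $1-r^{1/n} = \tfrac{|\ln r|}{n}\,(1+o(1))$ and hence $\ln(1-r^{1/n}) = -\ln n + \ln|\ln r| + o(1) \sim -\ln n$. Applying the same estimate with $nl$ in place of $n$ gives $\ln(1-r^{1/(nl)}) = -\ln n - \ln l + \ln|\ln r| + o(1) \sim -\ln n$, where the extra term $-\ln l$ is merely a bounded additive constant. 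Dividing numerator and denominator by $-\ln n$ then shows the ratio equals $\frac{1+O(1/\ln n)}{1+O(1/\ln n)}\to 1$. The \emph{in particular} statement is the special case $r = \tfrac{1}{2}$ and $m = nl$: under Definition~\ref{def:tn} (with $F$ continuous and strictly increasing, so $s_n$ is unique) we have $t_n = s_n$ and $t_{nl}=s_{nl}$, whence $\lim_{n\to\infty} t_n/t_{nl} = 1$.

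The only mildly delicate point, and the step I would guard most carefully, is the limit evaluation: both $\ln(1-r^{1/n})$ and $\ln(1-r^{1/(nl)})$ diverge to $-\infty$, so it would be a mistake to treat either factor as bounded. The crux is that the two terms share the \emph{same} leading $-\ln n$ divergence and differ only by the bounded gap $\ln l$, so that gap is washed out in the quotient. One could equally substitute $z = r^{1/n}$ and apply L'Hôpital as in the proof for the $\alpha$-regular family, but tracking the additive constant as above is the cleanest route and also explains transparently \emph{why} the factor $l$ disappears in the limit.
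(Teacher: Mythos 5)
Your proposal is correct, and the first half (solving $(1-e^{-cs_n})^n = r$ for $s_n = -\tfrac{1}{c}\ln(1-r^{1/n})$ and cancelling the constant in the ratio) is exactly the paper's argument. Where you genuinely diverge is the limit evaluation. The paper substitutes $z = r^{1/(nl)}$, rewrites the ratio as $\ln(1-z^l)/\ln(1-z)$ as $z \to 1^-$, and applies L'H\^opital's rule twice, arriving at $\lim_{z\to 1^-} \frac{lz-(l-1)}{z} = 1$. You instead expand $1 - r^{1/n} = \tfrac{|\ln r|}{n}\bigl(1+O(1/n)\bigr)$, so that
\[
\ln\bigl(1-r^{1/n}\bigr) = -\ln n + \ln|\ln r| + o(1), \qquad \ln\bigl(1-r^{1/(nl)}\bigr) = -\ln n - \ln l + \ln|\ln r| + o(1),
\]
and conclude that the ratio is $1+O(1/\ln n)$ since the two quantities share the same $-\ln n$ divergence and differ only by the bounded constant $\ln l$. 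Both routes are sound, but they buy different things: the paper's substitution reduces everything to a one-variable calculus exercise that is mechanical to check, while your expansion is more elementary (no $\infty/\infty$ hypotheses to verify twice), makes structurally transparent \emph{why} the factor $l$ washes out, and yields an explicit convergence rate $O(1/\ln n)$ that the L'H\^opital computation hides --- a rate of the kind the paper's open-questions section asks about. A further small point in your favor: you explicitly justify the ``in particular'' step by noting that $F$ is continuous and strictly increasing, so the median of the maximum is unique and $t_n = s_n$ at $r = \tfrac12$; the paper leaves this identification implicit.
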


\begin{proof}
For the first part, 
\begin{align*}
&\Pr[X_1,\ldots,X_n<s_n] = r \Rightarrow
(\Pr[X_1<s_n])^n = r\\
&\Rightarrow 
1-e^{-cs_n} = r^{\frac{1}{n}}\\
&\Rightarrow 
s_n = \frac{-\ln(1-r^{\frac{1}{n}})}{c}
\Rightarrow \frac{s_n}{s_m} = \frac{\ln(1-r^{\frac{1}{n}})}{\ln(1-r^{\frac{1}{m}})}
\end{align*}

For $m=nl$, this limit can be computed.

Substitute $z=r^{\frac{1}{nl}}$.
$$
\lim_{n \rightarrow \infty} \frac{\ln(1-r^{\frac{1}{n}})}{\ln(1-r^{\frac{1}{nl}})}
= \lim_{z\rightarrow 1^-}\frac{\ln(1-z^l)}{\ln(1-z)}
$$
By applying l'Hopital's rule twice, we get
$$
= \lim_{z\rightarrow 1^-}\frac{-lz^{l-1}}{1-z^l}\cdot\frac{1-z}{-1}
= \lim_{z\rightarrow 1^-} \frac{l(z^{l-1}-z^l)}{1-z^l}
$$

$$
= \lim_{z\rightarrow 1^-}\frac{l((l-1)z^{l-2} - lz^{l-1})}{-lz^{l-1}} = \lim_{z\rightarrow 1^-} \frac{lz-(l-1)}{z} = 1.
$$
\end{proof}

\begin{theorem}\label{thm:MHR}
Let $\D$ be any MHR distribution and $k=1$. Then for $np = \omega(1)$,
$\lim_{n\rightarrow \infty} \EE[\delta_{\max}] =0$.
\end{theorem}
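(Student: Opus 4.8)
The plan is to proceed in three stages: first remove the probabilistic participation and reduce to a deterministic number of bidders; then invoke the median criterion of Lemma~\ref{tn-condition}; and finally verify that criterion for MHR distributions by exploiting convexity of the cumulative hazard function.

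\emph{Stage 1: reduction to a deterministic number of draws.} Since $np = \omega(1)$, I can fix a sequence $i = i(n)$ with $i(n) \to \infty$ and $i = o(np)$. Inequality~(\ref{firstineq}) with $k = 1$ gives $\EE_{m\sim B(n,p)}[X^m_{(2)}/X^m_{(1)}] \ge \Pr[m>i]\cdot \EE[X^i_{(2)}/X^i_{(1)}]$, while Lemma~\ref{binomial_bound} yields $\Pr[m>i]\to 1$. As the ratio always lies in $[0,1]$, a squeeze shows it suffices to prove the $p=1$ statement $\EE[X^i_{(2)}/X^i_{(1)}]\to 1$ as $i\to\infty$; then $\EE[\delta_{\max}] = 1 - \EE_{m}[X^m_{(2)}/X^m_{(1)}] \to 0$.

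\emph{Stage 2: median criterion.} By Lemma~\ref{tn-condition}, to conclude $\EE[X^i_{(2)}/X^i_{(1)}]\to 1$ it is enough to show that for every constant $l\in\NN$, $\lim_{i\to\infty} t_i/t_{il} = 1$, where $t_r$ is the median of the maximum of $r$ draws. This is all that remains.

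\emph{Stage 3: MHR via convexity of the cumulative hazard.} This is the crux. Let $a$ be the left end of the support and set $H(x) = -\ln(1-F(x))$, so that $H'(x) = f(x)/(1-F(x))$ is exactly the hazard rate and $H(a) = 0$; thus MHR is precisely the statement that $H$ is convex. The key observation is that Definition~\ref{def:tn} gives $F(t_r)^r = \tfrac12$, i.e. $F(t_r) = (\tfrac12)^{1/r}$, so $H(t_r) = -\ln(1 - (\tfrac12)^{1/r})$ depends only on $r$ and \emph{not} on the distribution. Hence Lemma~\ref{tn_exponential} (with probability level $\tfrac12$) already supplies $H(t_n)/H(t_{nl}) = \ln(1-(\tfrac12)^{1/n})/\ln(1-(\tfrac12)^{1/(nl)}) \to 1$. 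Now convexity of $H$ with base point $(a,0)$ makes the chord slope $x\mapsto H(x)/(x-a)$ nondecreasing, so for $a \le t_n \le t_{nl}$ we obtain $(t_n - a)/(t_{nl}-a) \ge H(t_n)/H(t_{nl})$; since also $t_n \le t_{nl}$ forces the left side to be at most $1$, the squeeze gives $(t_n-a)/(t_{nl}-a)\to 1$, and therefore $t_n/t_{nl}\to 1$ (using $t_{nl}\to\infty$ in the unbounded-support case, so the additive shift by $a$ is negligible; the bounded case is immediate since $t_n,t_{nl}$ converge to the common right endpoint). Feeding this back through Stage~2 and Stage~1 completes the proof.

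The only real work is Stage~3, and within it the decisive step is recognizing that the MHR hypothesis is exactly convexity of $H$ and that $H(t_r)$ is distribution-free, so that the exponential computation of Lemma~\ref{tn_exponential} transfers verbatim and the comparison $t_n/t_{nl} \ge H(t_n)/H(t_{nl})$ becomes a one-line convexity fact. The subtlety to watch is the left endpoint $a$ and ensuring $t_{nl}\to\infty$, but neither affects the conclusion.
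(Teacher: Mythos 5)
Your proposal is correct and follows essentially the same route as the paper's own proof: reduce to a deterministic number of draws via inequality~(\ref{firstineq}) and Lemma~\ref{binomial_bound}, invoke the median criterion of Lemma~\ref{tn-condition}, and verify it for MHR distributions by combining the distribution-free identity $H(t_r) = -\ln\bigl(1-(1/2)^{1/r}\bigr)$ with the exponential limit of Lemma~\ref{tn_exponential} and the fact that MHR is exactly convexity of the cumulative hazard. Your Stage~3 merely executes the convexity transfer in a slightly more direct way (monotone chord slopes of $H$ anchored at the left endpoint, rather than the paper's supergradient-plus-contradiction argument for the concave inverse $\xi^{-1}$), but the decomposition, key lemmas, and underlying insight are identical.
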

\begin{proof}
Let $F(x)$ be the c.d.f of the distribution. Since $\D$ is MHR, we know $\xi(x) = \ln(1-F(x))$ is concave ~\cite{Barlow64}. 
Moreover, since $F(x)$ is monotone increasing, $F(0)=0$ and $\lim_{x \rightarrow \infty}F(x)=1$, we get $\xi(0)=0$ and $\lim_{x\rightarrow \infty}\xi(x)=-\infty$.
Thus $\xi^{-1}(y)$ is well defined on $y\in (-\infty,0]$, and is non-negative concave monotone decreasing.
Thus $F^{-1}(y) = \xi^{-1}(\ln(1-y))$. We check that the conditions for Lemma~\ref{tn-condition} hold. We begin by using Definition~\ref{def:tn} that $t_n = F^{-1}((\frac{1}{2})^{\frac{1}{n}})$.
$$
\frac{t_{n}}{t_{nl}} = 
\frac{F^{-1}((\frac{1}{2})^{\frac{1}{n}})}{F^{-1}((\frac{1}{2})^{\frac{1}{nl}})} = \frac{\xi^{-1}(\ln(1-(\frac{1}{2})^{\frac{1}{n}}))}{\xi^{-1}(\ln(1-(\frac{1}{2})^{\frac{1}{nl}}))}
$$
By Lemma \ref{tn_exponential}, 
\begin{equation}\label{exp_tn}
    \lim_{n\rightarrow \infty}\frac{\ln(1-(\frac{1}{2})^{\frac{1}{n}})}{\ln(1-(\frac{1}{2})^{\frac{1}{nl}})}=1
\end{equation}

We now show that a version of~\eqref{exp_tn} with $\xi^{-1}$ in the numerator and denominator is also true, thus proving the theorem. Denote $a(x) = \ln(1-(\frac{1}{2})^{\frac{1}{x}})$ and $b(x) = \ln(1-(\frac{1}{2})^{\frac{1}{xl}})$.
Assume for contradiction that there exists some sequence $\{x_t\}_{t\in \NN} \rightarrow \infty$ such that $\frac{\xi^{-1}(a(x_t))}{\xi^{-1}(b(x_t))}<d<1$ for some constant $d$.
$$
\Leftrightarrow (\frac{1}{d}-1)\xi^{-1}(a(x_t))<\xi^{-1}(b(x_t))-\xi^{-1}(a(x_t))
$$
Denote by $z(x_t)$ a subgradient of $\xi^{-1}$ at $a(x_t)$. By concavity of $\xi^{-1}$ we get
$\xi^{-1}(a(x_t)) \geq z(x_t)a(x_t)$ 
and
$\xi^{-1}(b(x_t))-\xi^{-1}(a(x_t)) \leq z(x_t)(b(x_t)-a(x_t))$. Thus 
\begin{align*}
(\frac{1}{d}-1)z(x_t)a(x_t)&\leq (\frac{1}{d}-1)\xi^{-1}(a(x_t))\\
&<\xi^{-1}(b(x_t))-\xi^{-1}(a(x_t))\\
&\leq z(x_t)(b(x_t)-a(x_t))
\end{align*}

We may divide by $z(x_t)$ on both sides since it is $<0$ for all $x_t>0$.
$$
\Leftrightarrow
(\frac{1}{d}-1) \leq \frac{b(x_t)-a(x_t)}{a(x_t)}
$$
This contradicts (\ref{exp_tn}), or
$$
\lim_{t\rightarrow \infty}\frac{b(x_t)}{a(x_t)}=1.
$$
Hence the conditions for Lemma~\ref{tn-condition} are verified, implying
$$\lim_{n\rightarrow \infty}\mE \left[\frac{X_{(2)}^n}{X_{(1)}^n} \right]=1.$$
As for bounded distributions, the result then follows from (\ref{firstineq}) and Corollary \ref{binomial_bound}.
\end{proof}

It is important to note that unlike for bounded support distributions, even for $p=1$, the positive result for $k=1$ unbounded support MHR distributions does not generalize to $k=o(n)$. As discussed in the introduction, even for small $k$ like $k=\sqrt{n}$ we show that MHR distributions leave room for manipulation, showing a sharp separation even within the class of MHR distributions.

\begin{proposition}
In a $k$-units first-price auction with $k=\big\lceil{\sqrt{n}}\big\rceil$ and $p=1$, the exponential distribution (an MHR distribution) has $\lim_{n\rightarrow \infty}\EE[\delta_{\max}(v)]>0.$
\end{proposition}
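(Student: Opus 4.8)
The plan is to show that for the exponential distribution with $k = \lceil\sqrt{n}\rceil$, the expected ratio $\EE[X^n_{(k+1)}/X^n_{(1)}]$ stays bounded away from $1$, so $\EE[\delta_{\max}] = 1 - \EE[X^n_{(k+1)}/X^n_{(1)}]$ stays bounded away from $0$. The strategy mirrors the median-based idea of Lemma~\ref{find_counterexamples}, but now I must control the $(k+1)$-st order statistic rather than just the second. The key structural fact I would exploit is the well-known representation of exponential order statistics as partial sums of independent exponentials: if $X_{(1)} \geq \dots \geq X_{(n)}$ are the order statistics of $n$ i.i.d.\ $\mathrm{Exp}(c)$ draws, then writing them from smallest to largest, the spacings $X_{(n-j)} - X_{(n-j+1)}$ are independent with $X_{(n-j)}-X_{(n-j+1)} \sim \mathrm{Exp}(c(j+1))$ (Rényi's representation). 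This gives clean closed forms for the relevant quantities.

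**First I would** use Rényi's representation to write the maximum as $X^n_{(1)} = \frac{1}{c}\sum_{j=1}^{n} \frac{E_j}{j}$ and the $(k+1)$-st largest as $X^n_{(k+1)} = \frac{1}{c}\sum_{j=k+1}^{n}\frac{E_j}{j}$, where the $E_j$ are i.i.d.\ standard exponentials. The difference is $X^n_{(1)} - X^n_{(k+1)} = \frac{1}{c}\sum_{j=1}^{k}\frac{E_j}{j}$, which has mean $\frac{1}{c}\sum_{j=1}^k \frac{1}{j} = \Theta(\log k) = \Theta(\log n)$ since $k = \lceil\sqrt{n}\rceil$. Meanwhile $\EE[X^n_{(1)}] = \frac{1}{c}H_n = \Theta(\log n)$ as well. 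So the \emph{gap} between the top bid and the $(k+1)$-st bid is of the same order $\Theta(\log n)$ as the top bid itself, which is exactly the signal that the ratio cannot approach $1$.

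**Next I would** make this precise at the level of the expected ratio. The cleanest route is to show $\EE[X^n_{(1)} - X^n_{(k+1)}]$ is a constant fraction of $\EE[X^n_{(1)}]$ and then transfer this to the ratio. Concretely, with $k = \lceil\sqrt{n}\rceil$ one has $\sum_{j=1}^{k}\frac{1}{j} = \frac{1}{2}\log n + O(1)$ while $\sum_{j=1}^{n}\frac{1}{j} = \log n + O(1)$, so the expected gap is asymptotically half the expected maximum. To turn the expectation statement into a bound on $\EE[X^n_{(k+1)}/X^n_{(1)}]$ I would argue that $X^n_{(1)}$ concentrates: since $X^n_{(1)}$ is a sum of independent terms with variance $\frac{1}{c^2}\sum \frac{1}{j^2} = O(1)$ and mean $\Theta(\log n)$, it concentrates relatively tightly around $\frac{1}{c}\log n$. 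Then on a high-probability event where $X^n_{(1)} \leq \frac{2}{c}\log n$, I can lower bound $1 - X^n_{(k+1)}/X^n_{(1)} = (X^n_{(1)} - X^n_{(k+1)})/X^n_{(1)}$ by a constant times $(X^n_{(1)}-X^n_{(k+1)})/\log n$, whose expectation is bounded below by a positive constant.

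**The main obstacle** is the transfer from a statement about expected differences to a statement about the \emph{expectation of the ratio}, which, as the paper itself emphasizes, is genuinely different from the ratio of expectations. The denominator $X^n_{(1)}$ is random and can be small, so I cannot naively divide inside the expectation. The care is in handling the lower tail of $X^n_{(1)}$: I would split on whether $X^n_{(1)}$ lies in a constant-factor window around $\frac{1}{c}\log n$, use concentration of $X^n_{(1)}$ to control the complementary event, and on the good event bound the ratio from above by $1 - \Omega(1)$ using the concentration of the gap $X^n_{(1)}-X^n_{(k+1)} = \frac{1}{c}\sum_{j=1}^k E_j/j$ around its mean $\frac{1}{2c}\log n$. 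Establishing this two-sided concentration for both the gap and the maximum, and combining them to isolate a constant-probability event on which the ratio is bounded away from $1$, is the technical heart; everything else is the harmonic-sum asymptotics $\sum_{j=1}^{\lceil\sqrt n\rceil}\frac1j \sim \frac12\log n$.
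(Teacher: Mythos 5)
Your proof is correct, but it takes a genuinely different route from the paper's. The paper stays inside the quantile framework of Section~\ref{sec:unbounded}: it defines $q_r$ by $\Pr[X_1,\ldots,X_r \leq q_r] = \frac{9}{10}$, shows by a random-subset counting contradiction that $\Pr[X^{n^2}_{(n)} > q_n] \leq \frac{1}{2}$, deduces that the event $\{X^{n^2}_{(n)} \leq q_n \wedge X^{n^2}_{(1)} > q_{n^2}\}$ has probability at least $\frac{9}{200}$, and then invokes the closed-form computation of Lemma~\ref{tn_exponential} to get $q_n/q_{n^2} \to \frac{1}{2}$, so that on a constant-probability event the ratio $X^{n^2}_{(n)}/X^{n^2}_{(1)}$ is at most $\frac{1}{2}+\epsilon$. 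You instead exploit R\'enyi's representation, writing jointly $X^n_{(1)} = \frac{1}{c}\sum_{j=1}^{n} E_j/j$ and $X^n_{(k+1)} = \frac{1}{c}\sum_{j=k+1}^{n} E_j/j$ with $E_j$ i.i.d.\ standard exponentials, so that the gap $\frac{1}{c}\sum_{j=1}^{k} E_j/j$ has mean $\frac{1}{c}H_k \sim \frac{1}{2c}\log n$ while the maximum has mean $\frac{1}{c}H_n \sim \frac{1}{c}\log n$, both with $O(1)$ variance; Chebyshev then pins the ratio near $\frac{1}{2}$ with probability $1-o(1)$, and since the ratio always lies in $[0,1]$ the transfer to expectations is immediate---your handling of the expected-ratio-versus-ratio-of-expectations issue is exactly right. (One indexing slip: the spacing $X_{(n-j)} - X_{(n-j+1)}$ is exponential with rate $c(n-j)$, not $c(j+1)$; your downstream partial-sum formulas are nevertheless the standard, correct ones.) Comparing the two: your argument is quantitatively sharper, yielding $\lim_{n\to\infty}\EE[X^n_{(k+1)}/X^n_{(1)}] = \frac{1}{2}$ exactly (hence $\lim_{n\to\infty}\EE[\delta_{\max}] = \frac{1}{2}$), whereas the paper only certifies a lower bound of roughly $\frac{9}{400}$; moreover, your method generalizes verbatim to $k = n^{\beta}$, giving limit $\beta$ for the exponential distribution and thereby illuminating the paper's open question about the exact threshold in $k$ (manipulability vanishes precisely when $\log k = o(\log n)$). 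What the paper's route buys in exchange is self-containedness---no appeal to the R\'enyi representation, which is special to exponentials---and structural unity with Lemma~\ref{find_counterexamples}, so the whole section runs on a single quantile-comparison method.
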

\begin{proof}
Let $q_r$ be defined by $\Pr[X_1,\ldots,X_r \leq q_r] = \frac{9}{10}$.
Hence $\Pr[X_1\lor\ldots \lor X_n > q_n] = \frac{1}{10}$, or equivalently $\Pr[X^n_{(1)} > q_n] = \frac{1}{10}$.\\
We now show that $\Pr[X^{n^2}_{(n)} > q_n] \leq\frac{1}{2}$ by method of contradiction. Assume by contradiction that
$\Pr[X^{n^2}_{(n)} > q_n]>\frac{1}{2}$. Then 
\begin{align*}
\Pr[X^n_{(1)} > q_n] 
&= \Pr[\text{uniform at random n-subset of } X_1,\ldots,X_{n^2} \text{ contains at}\\&\phantom{xxxxx}\text{least 1 element }>q_n]\\
&\geq \Pr[\text{ At least } n \text{ of } X_1,\ldots,X_{n^2}  \text{ are } >q_n]\\
&\qquad \times \Pr[\text{ at least one that is } >q_n \text{ is chosen in our random}\\&\phantom{xxxxxxxxx}\text{$n$-subset}\ | \ n \text{ out of } n^2 \text{ variables are } >q_n]\\
&\geq \frac{1}{2}(1-(1-\frac{1}{n})^n)
\end{align*}
As $n\to\infty$, the above quantity approaches $\frac{1}{2}(1-\frac{1}{e}) > \frac{1}{10}$, which is a contradiction to the fact that $\Pr[X^n_{(1)} > q_n] = \frac{1}{10}$.

Thus $\Pr[X^{n^2}_{(n)} > q_n]\leq \frac{1}{2}$
or equivalently $\Pr[X^{n^2}_{(n)}\leq q_n]\geq \frac{1}{2}$.
This is enough for a constant lower bound on the event we're interested in.
\begin{align*}
\Pr[X_{(n)}^{n^2} \leq q_n \land X_{(1)}^{n^2} > q_{n^2}] &=\Pr[X_{(n)}^{n^2} \leq q_{n} \ | \ X_{(1)}^{n^2} > q_{n^2}]\Pr[X_{(1)}^{n^2} > q_{n^2}]\\
&=\frac{1}{10}\Pr[X_{(n)}^{n^2} \leq q_{n} \ | \ X_{(1)}^{n^2} > q_{n^2}]\\
&\geq  \frac{1}{10}\Pr[X_{(n)}^{n^2} \leq q_{n} \wedge X_{(2)}^{n^2} \leq q_{n^2}\ | \ X_{(1)}^{n^2} > q_{n^2}]\\
&= \frac{1}{10}\Pr[X_{(n)}^{n^2} \leq q_{n} \ | \ X_{(2)}^{n^2} \leq q_{n^2}, \  X_{(1)}^{n^2} > q_{n^2}]\Pr[X_{(2)}^{n^2} \leq q_{n^2}]\\
&\geq \frac{9}{100}\Pr[X_{(n)}^{n^2} \leq q_{n} \ | \ X_{(2)}^{n^2} \leq q_{n^2}] \\
&\geq\frac{9}{100}\Pr[X_{(n)}^{n^2} \leq q_{n}]\geq \frac{9}{200}.
\end{align*}
From previous calculations in Lemma \ref{tn_exponential}, we know that for the exponential distribution 

$$
\frac{q_{n}}{q_{n^2}} = 
 \frac{\ln(1-(\frac{9}{10})^{\frac{1}{n}})}{\ln(1-(\frac{9}{10})^{\frac{1}{n^2}})}.
$$
The following limit is tedious to compute, but it can be shown that 
$$
 \frac{\ln(1-(\frac{9}{10})^{\frac{1}{n}})}{\ln(1-(\frac{9}{10})^{\frac{1}{n^2}})} \rightarrow \frac{1}{2}.
$$
Hence, we conclude that as $n\rightarrow \infty$ with probability at least $\frac{9}{200}$, $\left[\frac{X_{(n)}^{n^2}}{X_{(1)}^{n^2}} \right]\leq \frac{1}{2}+\epsilon$ for any $\epsilon>0$. In particular therefore,
$$\lim_{n\rightarrow \infty}\mE \left[\frac{X_{(n)}^{n^2}}{X_{(1)}^{n^2}} \right]<1.$$
Letting $p=1$,
$$
\lim_{n\rightarrow \infty} \EE[\delta_{\max}(v)]  = 1-
\lim_{n \rightarrow \infty} \mE\left[\frac{X_{(n)}^{n^2}}{X_{(1)}^{n^2}} \right]>0.
$$
Thus, even $k$ as small as $\sqrt{n}$ makes an MHR distribution have room to bid manipulation.
\end{proof}

\section{Extension to Deviations from BNE}


\textcolor{teal}{We conclude by arguing that all our qualitative insights extend to the case where bidders consider deviating from equilibrium.}

\textcolor{teal}{A first-price auction with $n$ identical bidders and $k$ identical goods has a unique, symmetric Bayes-Nash equilibrium (BNE) $b(v)$ in which bidders shade their bids \cite{ChawlaH13}.}  

\textcolor{teal}{Then we can ask:}

\textcolor{teal}{1) How much can bidders change prices relative to what they would have paid if they were to bid truthfully?}

\textcolor{teal}{2) How much can bidders change prices relative to what they would have paid if they were to bid as in the BNE?}


\textcolor{teal}{In the first case, we would need to evaluate $1-\EE[b(X^n_{(k)})/X^n_{(1)}]$. In the second case, it would be $1-\EE[b(X^n_{(k)})/b(X^n_{(1)})]$.}
\textcolor{teal}{We show that the asymptotic behavior of both these quantities is identical to that of $1-\EE[X^n_{(k)}/X^n_{(1)}]$.}

\johcomm{
\begin{proposition}\label{BNE}
Let $b(v)$ be the unique Bayes-Nash Equilibrium for an agent with valuation $v$. Then for any $k\in [n]$, \begin{align*}
    &\text{1) \quad} \lim_{n\rightarrow \infty}\mE \left[\frac{b\left(X_{(k)}^n\right)}{X_{(1)}^n} \right] = \lim_{n\rightarrow \infty} \mE\left[\frac{X_{(k)}^n}{X_{(1)}^n} \right], \text{ and}\\
    &\text{2) \quad} \lim_{n\rightarrow \infty}\mE \left[\frac{b\left(X_{(k)}^n\right)}{b(X_{(1)}^n)} \right] = \lim_{n\rightarrow \infty} \mE\left[\frac{X_{(k)}^n}{X_{(1)}^n} \right].
\end{align*}
\end{proposition}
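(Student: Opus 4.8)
The plan is to obtain an explicit handle on the equilibrium bid function through Myerson's payment identity and then to show that the bid shading is asymptotically negligible at the order statistics that matter. In the symmetric, increasing BNE a bidder wins exactly when her value lies among the top $k$, so her interim winning probability is $x(v) = \Pr[\mathrm{Bin}(n-1,\,1-F(v)) \le k-1]$, which is increasing in $v$. Since the auction is pay-your-bid, the interim expected payment of type $v$ is $x(v)\,b(v)$, and Myerson's identity $x(v)b(v) = v\,x(v) - \int_0^v x(s)\,ds$ (in Stieltjes form if $F$ is not continuous) yields
\[
b(v) = v - \frac{\int_0^v x(s)\,ds}{x(v)} = \frac{\int_0^v s\,dx(s)}{x(v)}.
\]
In particular $0 \le b(v) \le v$, so $b(v)$ is a weighted average of values below $v$; equivalently $b(v) = \EE\!\left[\,Y^{\,n-1}_{(k)} \mid Y^{\,n-1}_{(k)} \le v\,\right]$, the expected value of the $k$-th highest competitor conditioned on $v$ winning.

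First I would reduce both claims to a single \emph{shading lemma}: that the relative shading vanishes at the top order statistics, i.e.\ $b(X^n_{(k)})/X^n_{(k)} \to 1$ and $b(X^n_{(1)})/X^n_{(1)} \to 1$ in probability. Granting this, part (1) follows by a one-sided squeeze: $b(v)\le v$ forces $\EE[b(X^n_{(k)})/X^n_{(1)}] \le \EE[X^n_{(k)}/X^n_{(1)}]$, while the nonnegative gap $\EE[(X^n_{(k)}-b(X^n_{(k)}))/X^n_{(1)}]$ is at most $\EE[1 - b(X^n_{(k)})/X^n_{(k)}]$ (using $X^n_{(1)} \ge X^n_{(k)}$), which tends to $0$ by the first shading limit and bounded convergence, since all ratios lie in $[0,1]$. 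Part (2) follows by writing $\frac{b(X^n_{(k)})}{b(X^n_{(1)})} = \frac{b(X^n_{(k)})}{X^n_{(k)}}\cdot\frac{X^n_{(k)}}{X^n_{(1)}}\cdot\frac{X^n_{(1)}}{b(X^n_{(1)})}$ and applying \emph{both} shading limits together with bounded convergence.

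The heart of the argument, and the step I expect to be the main obstacle, is the shading lemma itself. The clean estimate is: for any $\eta \in (0,1)$, splitting $\int_0^v x(s)\,ds$ at $\theta=(1-\eta)v$ and using monotonicity of $x$ gives
\[
1 - \frac{b(v)}{v} \;\le\; \eta + (1-\eta)\,\frac{x\big((1-\eta)v\big)}{x(v)}.
\]
Thus it suffices to show the winning probability collapses just below the relevant order statistic, i.e.\ $x((1-\eta)X^n_{(1)})/x(X^n_{(1)}) \to 0$ and likewise at $X^n_{(k)}$. This is precisely a statement about the extreme-tail quantiles concentrating, and here I would reuse the median-of-the-maximum machinery of Section~\ref{sec:unbounded}: the order statistics $X^n_{(1)}$ and $X^n_{(k)}$ sit at the $t_n$-type tail scales of Definition~\ref{def:tn}, around which $x$ transitions from near $0$ to near $1$ over a window small relative to $v$ exactly when those scales separate from $(1-\eta)v$. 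For bounded support this collapse is immediate from a Chernoff bound as in Lemma~\ref{lem:bounded}; for unbounded MHR/regular support it is where the $t_n/t_{n\ell}\to 1$ estimates of Lemma~\ref{tn-condition} do the work.

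Finally, a caveat I would flag at the outset: the two shading limits behave differently as $k$ grows. The numerator shading $b(X^n_{(k)})/X^n_{(k)}\to 1$ holds throughout $k=o(n)$, but the denominator shading $b(X^n_{(1)})/X^n_{(1)}\to 1$ can fail once $k=\Theta(n)$: for $U[0,1]$ with $k=\beta n$ one computes $b(X^n_{(1)})\to 1-\beta<1$, which breaks part (2). The natural scope of the proposition is therefore the regime $k=o(n)$ that the paper studies (with $k=n$ degenerate, since then $b\equiv 0$), and in that regime both shading limits hold and the reduction above goes through verbatim, invoking Lemma~\ref{binomial_bound} to pass from a fixed number of draws to $m\sim B(n,p)$ participants.
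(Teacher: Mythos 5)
Your proof follows the same core route as the paper's: use Myerson's payment identity / revenue equivalence to write $b(v)=v-\int_0^v x(s)\,ds/x(v)$, then show the shading is asymptotically negligible at the top order statistics. Your execution is, however, more careful on three counts: you keep the general-$k$ interim probability $x(v)=\Pr[\mathrm{Bin}(n-1,1-F(v))\le k-1]$ instead of only the $k=1$ formula; you give part 2) its own argument via the three-factor product (the paper proves only part 1) and calls 2) ``similar,'' even though 2) genuinely needs the additional limit $X^n_{(1)}/b(X^n_{(1)})\to 1$); and your $\eta$-split bound $1-b(v)/v\le\eta+(1-\eta)\,x((1-\eta)v)/x(v)$ correctly replaces the paper's step, which treats the pointwise statement $F(z)^{n-1}/F(v)^{n-1}<\epsilon$ (fixed $z<v$) as if it were uniform and applies it at order statistics that themselves grow with $n$. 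Your $k=\Theta(n)$ caveat is also right: for $U[0,1]$ with $k=\beta n$ the denominator bid satisfies $b(X^n_{(1)})\to 1-\beta$, so the left side of 2) tends to $1$ while the right side tends to $1-\beta$; the blanket ``for any $k\in[n]$'' fails for part 2), a point the paper misses.

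The genuine gap is your claim that the collapse $x((1-\eta)X^n_{(1)})/x(X^n_{(1)})\to 0$ holds for ``unbounded MHR/regular support.'' For regular, non-MHR tails it is false---and so is the proposition, so no argument can close this case. Take the paper's own $\alpha$-strongly regular family $F(x)=1-x^{-d}$ on $[1,\infty)$ with $k=1$: the top order statistics live at scale $n^{1/d}$, and at $v=t\,n^{1/d}$ we have $F(v)^{n-1}\to e^{-t^{-d}}$ together with
\[
\int_1^{t\,n^{1/d}}F(z)^{n-1}\,dz\;\sim\;n^{1/d}\int_0^t e^{-w^{-d}}\,dw,
\]
so $1-b(v)/v\to\int_0^t e^{-w^{-d}}\,dw\,\big/\,\bigl(t\,e^{-t^{-d}}\bigr)$, a constant in $(0,1)$. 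The equilibrium bid remains a constant factor below the value at the Fr\'echet scale, hence $\lim_{n\to\infty}\EE[b(X^n_{(2)})/X^n_{(1)}]<\lim_{n\to\infty}\EE[X^n_{(2)}/X^n_{(1)}]$ strictly, and even part 1) fails for these distributions. (The paper's proof breaks at exactly this point; your framework at least makes the needed condition explicit.) The repair is to restrict the proposition to distributions where the collapse actually holds---bounded support, or unbounded MHR, i.e.\ precisely the $t_n/t_{n\ell}\to1$ regime of Lemma~\ref{tn-condition}---and there your plan does go through: if $n\bigl(1-F((1-\eta)t_n)\bigr)$ stayed bounded along a subsequence, then $(1-\eta)t_n\ge t_{cn}$ for some constant $c\in(0,1]$, giving $t_m/t_{m\ell}\le 1-\eta$ for $m=cn$ and integer $\ell\ge 1/c$, a contradiction; combined with $X^n_{(1)}\in[t_{n/\ell},t_{n\ell}]$ with probability tending to one, this yields the collapse, and your reduction then completes the proof for bounded-support and MHR distributions.
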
}

That is, asymptotic price manipulability stays the same when all other agents bid according to the BNE, as when they are truthful. The main part of the argument is that by the revenue-equivalence theorem, the difference between a truthful bid and a BNE bid can be shown to have an upper bound $\beta^n$ for some $\beta \in (0,1)$. Since we take the limit $n \Rightarrow \infty$, the result follows.

\begin{proof}
\textcolor{teal}{We only provide a proof of 1); the argument for 2) is similar. We first consider the $k = 1$ case.}

The revenue equivalence theorem can be applied to derive the BNE bids. The revenue equivalence theorem guarantees that not just expected revenue are equal between the first- and second-price auctions, but also the payments made by any individual bidder at any given value. In a first price auction, a bidder with value $v$ pays $b(v)x(v)$ in expectation over other bidders' bids, where $b(v)$ is their BNE bid and $x(v)$ is the probability that they get allocated. We drop the subscript $i$ because we are in an i.i.d. setting. The second-price auction's payment formula is well known, namely $vx(v) - \int_{0}^{v} x(z) dz$. Equating this payment formula to $b(v)x(v)$, and dividing throughout by $x(v)$, we get,
\[
b(v) = v - \frac{\int_{0}^{v} x(z) dz}{x(v)}.
\]

Observe that the probability that a bidder with value $z$ wins in a first- or second-price auction is exactly the probability that every other bidder has a smaller value, namely $F(z)^{n-1}$. So the above can be rewritten as:
\[
b(v) = v - \frac{\int_0^v F(z)^{n-1}dz}{F(v)^{n-1}}.
\]

Our price-manipulability metric will now be $E[\frac{b(X_{(2)})}{X_{(1)}}]$ instead of what we currently have in the paper of $E[\frac{X_{(2)}}{X_{(1)}}]$, where $X_{(1)}$ and $X_{(2)}$ are random variables representing the largest and second-largest of $n$ i.i.d.~random variables. We now argue that as $n$ grows large $E[\frac{b(X_{(2)})}{X_{(1)}}]$ and $E[\frac{X_{(2)}}{X_{(1)}}]$ are \emph{identical}. To show this, let us consider the difference of the two metrics, namely, $E[\frac{X_{(2)}}{X_{(1)}}] - E[\frac{b(X_{(2)})}{X_{(1)}}]$. By the BNE bid formula above, this difference boils down to 
\[E\left[\frac{\frac{\int_0^{x_2} F(z)^{n-1}dz}{F(x_2)^{n-1}}}{x_1}\right],
\]
where $x_1$ is drawn from the distribution of $X_{(1)}$ and $x_2$ is drawn from the distribution of $X_{(2)}$. We are interested in the limit as $n$ goes to infinity of the above, namely,

$$\lim_{n\to\infty} E\left[\frac{\frac{\int_0^{x_2} F(z)^{n-1}dz}{F(x_2)^{n-1}}}{x_1}\right].$$

Notice that for any continuous distribution, regardless of whether it is regular or irregular, regardless of whether it is bounded or not bounded (which in particular includes every distribution we talk about like exponential, power-law, normal etc.), the quantity $F(z)/F(v) < 1$ for any $z < v$. Therefore, for any given $\epsilon > 0$, there exists $n(\epsilon)$ s.t., for all $n > n(\epsilon)$, we have $\frac{F(z)^{n-1}}{F(v)^{n-1}} < \epsilon$. Thus, we have:

$$\lim_{n\to\infty} E\left[\frac{\frac{\int_0^{x_2} F(z)^{n-1}dz}{F(x_2)^{n-1}}}{x_1}\right] <  \lim_{n\to\infty} E\left[\frac{\int_0^{x_2} \epsilon dz}{x_1}\right] = \lim_{n\to\infty} \epsilon \cdot E\left[\frac{x_2}{x_1}\right].$$

Since $\epsilon$ can be made arbitrarily small as $n$ grows larger, $\epsilon \cdot E[\frac{x_2}{x_1}]$ approaches $0$, completing our proof (notice that $E[\frac{x_2}{x_1}] \leq1$ regardless of $n$).

Notice that this proof remains unchanged even if we consider the $k$ unit auction (instead of the $1$-unit auction discussed above), and therefore compute the ratio of $E[\frac{b(X_{(k+1)})}{X_{(1)}}]$. By the exact same argument as above, we get that the new metric $E[\frac{b(X_{(k+1)})}{X_{(1)}}]$ will be equal to the current metric in the paper of $E[\frac{X_{(k+1)}}{X_{(1)}}]$.
\end{proof}

\section{Conclusion} 

In this work, we adopt a metric from the bitcoin fee design market, that we call relative price change, to evaluate price manipulability in the non-truthful yet frequently used First-Price Auction. Using this metric we give an almost complete picture of the manipulability of large first-price auctions, and exhibit some surprising boundaries. We believe that the same metric could yield further insights when applied to other non-truthful auctions such as the Generalized Second-Price Auction. Moreover, due to its simplicity and practitioner friendliness coming out of its reliance only on bids and prices rather than true values, it would be feasible and instructive to evaluate the metric on real data.

\bibliographystyle{siam}
\bibliography{sample-base}



\appendix

\clearpage

\newpage
\section{Appendix}

\subsection{Proof of Lemma~\ref{binomial_bound}}\label{app:binomial_bound}
\begin{lemma}[\textbf{Restatement of Lemma~\ref{binomial_bound}}]
For $m\sim B(n,p)$, and for any $i = o(np)$, $\lim_{n\rightarrow \infty}\Pr[m > i]=1$.
\end{lemma}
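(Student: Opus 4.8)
The plan is to recognize $m$ as a sum of $n$ i.i.d.\ Bernoulli$(p)$ random variables, so that $\EE[m] = np$ and $\mathrm{Var}[m] = np(1-p)$, and then to control the lower tail $\Pr[m \le i]$ by a standard concentration inequality, exploiting that the threshold $i$ sits far below the mean $np$.

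First I would record the qualitative consequence of the hypothesis. Since $i$ is a positive integer index, $i \ge 1$, and $i = o(np)$ means $i/(np) \to 0$; combining these gives $1/(np) \le i/(np) \to 0$, i.e.\ $np \to \infty$. This divergence is exactly the regime in which concentration becomes effective, and it is what underlies the claim. I would also note that $i = o(np)$ yields $np - i = (1 - o(1))\,np > 0$ for all large $n$, so the gap between the mean and the threshold is a constant fraction of $np$.

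Next, writing $\Pr[m > i] = 1 - \Pr[m \le i]$, I would bound the lower tail via Chebyshev's inequality. Since $np - i > 0$ eventually,
\[
\Pr[m \le i] \le \Pr\bigl[\,|m - np| \ge np - i\,\bigr] \le \frac{np(1-p)}{(np - i)^2}.
\]
Using $np - i \ge np/2$ for large $n$, the right-hand side is at most $np/(np/2)^2 = 4/(np) \to 0$. (An alternative that gives an exponentially strong estimate is the multiplicative Chernoff lower-tail bound $\Pr[m \le (1-\delta)np] \le e^{-np\delta^2/2}$ applied with $\delta = 1 - i/(np) \to 1$, yielding $\Pr[m \le i] \le e^{-(1-o(1))np/2} \to 0$.) Either way $\Pr[m \le i] \to 0$, hence $\Pr[m > i] \to 1$.

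There is no genuine obstacle here, which matches the paper's remark that the proof is straightforward. The only point deserving a moment's care is the verification that the hypothesis $i = o(np)$ actually delivers both the divergence $np \to \infty$ and the lower bound $np - i \ge \tfrac{1}{2}np$; with those in hand the tail bound vanishes rather than merely staying bounded, and the conclusion is immediate.
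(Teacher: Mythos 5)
Your proof is correct, and it takes a genuinely different --- and more elementary --- route than the paper. The paper bounds the lower tail via the Chernoff--Hoeffding inequality in KL-divergence form, $\Pr[m \le i] \le \exp\left(-nD\left(\tfrac{i}{n}\,\|\,p\right)\right)$, and then spends two auxiliary lemmas (parametrizing $np = f(n)$ and $i = g(f(n))$ with $g(x) = x/h(x)$) to verify that the exponent $nD\left(\tfrac{i}{n}\,\|\,p\right)$ diverges. You instead observe that $i = o(np)$ forces $np \to \infty$ and $np - i \ge \tfrac{1}{2}np$ eventually, and then apply Chebyshev's inequality to get $\Pr[m \le i] \le np(1-p)/(np-i)^2 \le 4/(np) \to 0$. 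Since the lemma asserts only a limit and not a rate, the second-moment bound fully suffices, and it avoids all of the paper's bookkeeping with $f$, $g$, $h$; what the paper's heavier machinery buys is an exponentially small tail ($e^{-\Theta(np)}$ versus your polynomial $O(1/(np))$), which could matter for the non-asymptotic refinements the authors mention as an open direction, but is not needed here. One small point you handled implicitly but correctly: the hypothesis $i = o(np)$ alone does not force $np \to \infty$ unless one insists $i \ge 1$; you state this assumption explicitly ($i$ is a positive integer index), and the paper makes exactly the same implicit assumption when it argues that $np = \omega(1)$ is without loss of generality because otherwise no such $i$ exists.
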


As mentioned earlier, while it is possible that Lemma~\ref{binomial_bound} is already known, we were unable to find a proof. So we give a proof here.

A strong tail bound on the binomial distribution when $i\leq np$ is given by

\begin{equation}\label{binomial_ineq}
\Pr[m \leq i] \leq \exp\left(-nD\left(\frac{i}{n}|| p\right)\right)
\end{equation}
Here $D(a||p)$ is the KL-divergence, defined as $D(a||p) = a\log(\frac{a}{p})+(1-a)\log(\frac{1-a}{1-p})$. 

Showing $\lim_{n\to\infty} nD\left(\frac{i}{n} || p\right) = \infty$ immediately proves Lemma~\ref{binomial_bound}. We show that $\lim_{n\to\infty} nD\left(\frac{i}{n} || p\right) = \infty$ in Lemma~\ref{tight_KL_bound}, which in turn uses Lemma~\ref{lower_fg_bound}. 

Since Lemma~\ref{binomial_bound} is for any $i=o(np)$, we begin by capturing $np$ in some functional form. Let $f(n) = np$, and let $i = g(np) = g(f(n))$, where $g(x)$ is $o(x)$, and $f(x)<x$, and $f(\cdot)$ is $\omega(1)$. Note that this is without loss of generality, and proving 
Lemma~\ref{binomial_bound} for such $i$ and $p$ is sufficient. This is because assuming that $f(n) (=np) < n$ just avoids the case of $p=1$, for which the proof of 
Lemma~\ref{binomial_bound} is trivial as things are deterministic. Also, assuming $f(.) = \omega(1)$, i.e., $np = \omega(1)$, is without loss of generality because if $np = O(1)$, there can be no $i$ s.t. $i = o(np)$. Also, since $i=o(np)$ and we have let $i = g(np)$, it immediately follows that $g(x) = o(x)$.

$$ D(\frac{i}{n}||p) = \frac{g(f(n))}{n}\log\left(\frac{g(f(n))}{f(n)}\right)+\left(1-\frac{g(f(n))}{n}\right)\log\left(\frac{1-\frac{g(f(n))}{n}}{1-\frac{f(n)}{n}}\right)$$

We begin by simplifying this expression, in particular the second term.

\begin{lemma}\label{lower_fg_bound}
Let $f(x)$ and $g(x)$ be non-negative, $f(x)< x$ and $g(x)$ is $o(x)$. $c<1$ a non-negative constant. Then for all $n$ sufficiently large,
$$\log\left(\frac{1-\frac{g(f(n))}{n})}{1-\frac{f(n)}{n}}\right) \geq c \cdot \frac{f(n)}{n}
$$
\end{lemma}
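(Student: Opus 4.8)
The plan is to reduce everything to two scalar quantities, $a_n := g(f(n))/n$ (which in the application equals $i/n$) and $b_n := f(n)/n$ (which equals $p$), and then to prove the cleaner intermediate inequality $\log\!\left(\frac{1-a_n}{1-b_n}\right) \ge b_n - a_n$, followed by the observation that $b_n - a_n \ge c\,b_n$ for all sufficiently large $n$. Chaining these gives precisely the claimed bound $\log\!\left(\frac{1-a_n}{1-b_n}\right) \ge c\,\frac{f(n)}{n}$.

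First I would record the one asymptotic fact that drives the whole argument: since $g(x)$ is $o(x)$ and $f(n) = \omega(1) \to \infty$, the arguments fed into $g$ tend to infinity, so $a_n/b_n = g(f(n))/f(n) \to 0$ as $n \to \infty$. In particular $0 \le a_n < b_n < 1$ for all large $n$ (the bound $b_n < 1$ coming from the hypothesis $f(x) < x$), which ensures that every logarithm below is well-defined and that $\frac{1-a_n}{1-b_n} > 1$, so the left-hand side is genuinely positive.

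The analytic core is the inequality $\log\!\left(\frac{1-a_n}{1-b_n}\right) \ge b_n - a_n$, and it is short. Writing the left-hand side as $\int_{a_n}^{b_n}\frac{1}{1-x}\,dx$ and bounding the integrand from below by $1$ on $[0,1)$ yields the claim immediately; equivalently, one expands $\log(1-x) = -\sum_{j\ge 1} x^j/j$ and notes that $\log(1-a_n)-\log(1-b_n) = (b_n - a_n) + \tfrac12(b_n^2 - a_n^2) + \cdots \ge b_n - a_n$ term by term, since $b_n \ge a_n \ge 0$.

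Finally I would combine the two pieces. The target $b_n - a_n \ge c\,b_n$ is equivalent to $a_n/b_n \le 1 - c$, and because $c < 1$ we have $1 - c > 0$, so the limit $a_n/b_n \to 0$ from the first step forces this for all $n$ beyond some threshold. I do not expect a genuine obstacle: the only subtlety worth flagging is the \emph{order} of the asymptotics — one must compare $g(f(n))$ against $f(n)$ (both growing), concluding $g(f(n))/f(n)\to 0$, rather than mistakenly comparing $g(f(n))$ against $n$. With that observation in place the proof is just a one-line integral bound followed by a limit comparison.
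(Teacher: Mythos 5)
Your proof is correct, but it takes a genuinely different route from the paper's. The paper's proof applies $x \mapsto 2^x$ to both sides, rearranges to $n \geq n2^{cf(n)/n} - f(n)2^{cf(n)/n} + g(f(n))$, invokes $g(x) = o(x)$ to pass to the stronger inequality in which $g(f(n))$ is absorbed into a $(1-c)f(n)2^{cf(n)/n}$ slack term, and finally substitutes $z = 2^{cf(n)/n}$ to reduce everything to the scalar calculus fact $z - 1 \leq z\log z$ for $z \geq 1$. You instead decompose the claim into two independent pieces: a $c$-free inequality $\log\bigl(\tfrac{1-a_n}{1-b_n}\bigr) \geq b_n - a_n$, which follows in one line from $\int_{a_n}^{b_n}\tfrac{dx}{1-x} \geq b_n - a_n$ because the integrand is at least $1$ on $[0,1)$, and the asymptotic comparison $b_n - a_n \geq c\,b_n$, which is equivalent to $a_n/b_n = g(f(n))/f(n) \leq 1-c$ and is forced for large $n$. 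Both arguments ultimately pivot on the same fact, namely $g(f(n))/f(n) \to 0$; note that this requires $f(n) \to \infty$, an $\omega(1)$ hypothesis that appears in the appendix's surrounding setup but not in the lemma statement itself, and your proof (exactly like the paper's) silently relies on it. What your decomposition buys is transparency: it isolates precisely where $c<1$ enters, avoids the exponentiation and the somewhat opaque ``stronger inequality'' substitution, and replaces $z - 1 \leq z\log z$ with an elementary monotonicity bound. One small point of hygiene: the paper works in base-$2$ logarithms (hence the $2^x$), whereas your integral identity is for the natural logarithm; since $\log_2 y \geq \ln y$ for $y \geq 1$, your lower bound transfers a fortiori, so nothing breaks, but it is worth saying explicitly which logarithm you mean.
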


\begin{proof}
Our goal is to show that $\log\left(\frac{n- g(f(n))}{n-f(n)}\right) \geq c \cdot \frac{f(n)}{n}$.
Apply the monotone increasing function $x \mapsto 2^x$ to both sides of the to-be-proven inequality. We now need to prove that:
$$
\Leftrightarrow \left(\frac{n- g(f(n))}{n-f(n)}\right) \geq 2^{(c \frac{f(n)}{n})}
$$
$$
\Leftrightarrow n \geq n2^{(c \frac{f(n)}{n})} - f(n)2^{(c \frac{f(n)}{n})} +g(f(n))
$$
Since $g(x)$ is $o(x)$, the following is a stronger inequality to prove:
$$
n \geq n2^{(c \frac{f(n)}{n})} - c\cdot f(n)2^{(c \frac{f(n)}{n})}
$$
$$
\Leftrightarrow (2^{(c \frac{f(n)}{n})} -1) \leq   c\cdot \frac{f(n)}{n}2^{(c \frac{f(n)}{n})}
$$
Set $z = 2^{(c \frac{f(n)}{n})}$.
Then above inequality becomes
$$z -1 \leq z\log z$$
where $z\geq 1$.
This holds because $z-1 = z\log z$ for $z=1$ and clearly $z\log z$ has larger derivative than $z$: $\frac{d}{dz}z\log z = \frac{1}{\ln(2)}+\log(z)>1$ for $z \geq 1$. 
\end{proof}






Lemma \ref{lower_fg_bound} suffices to give a precise description of when it holds that
$\lim_{n\rightarrow \infty} nD(\frac{i}{n}|| p) = \infty$.

\begin{lemma}\label{tight_KL_bound}
Let $p=\frac{f(n)}{n}$ and $i=g(f(n))$, where $f(x)$ and $g(x)$ are non-negative, $f(x)< x$ and $f(x)$ is $\omega(1)$.
Let $g(x) = \frac{x}{h(x)}$ for some function $h$ such that $\lim_{x\rightarrow \infty} h(x) = \infty$, then $\lim_{n\rightarrow \infty} nD(\frac{i}{n}|| p) = \infty$
\end{lemma}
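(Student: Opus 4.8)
The plan is to decompose $n D(\tfrac{i}{n}\|p)$ into its two summands, evaluate the first one exactly, lower-bound the second using Lemma~\ref{lower_fg_bound}, and then show that the positive second summand dominates the magnitude of the negative first. Throughout write $f = f(n)$ for brevity, and recall that $p = f/n$ and $i = g(f) = f/h(f)$, so that $i/n = f/(n\,h(f)) = p/h(f)$. The first thing I would do is compute the first term of the KL-divergence after multiplying by $n$: since $\tfrac{i/n}{p} = \tfrac{g(f)}{f} = \tfrac{1}{h(f)}$, we get
\[
n\cdot\frac{i}{n}\log\!\left(\frac{i/n}{p}\right) = g(f)\log\!\left(\frac{1}{h(f)}\right) = -g(f)\log h(f) = -\frac{f\log h(f)}{h(f)}.
\]

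Next I would handle the second (positive) term. Applying Lemma~\ref{lower_fg_bound} with any fixed constant $c<1$ gives, for all sufficiently large $n$,
\[
\log\!\left(\frac{1-\tfrac{g(f)}{n}}{1-\tfrac{f}{n}}\right)\geq c\cdot\frac{f}{n},
\]
so multiplying the full second summand $\bigl(1-\tfrac{i}{n}\bigr)\log(\cdots)$ by $n$ yields a lower bound of $cf\bigl(1-\tfrac{i}{n}\bigr)$. Combining the two contributions,
\[
n D\Bigl(\tfrac{i}{n}\,\big\|\,p\Bigr)\geq cf\Bigl(1-\frac{i}{n}\Bigr) - \frac{f\log h(f)}{h(f)} = f\Bigl[\,c\Bigl(1-\frac{i}{n}\Bigr) - \frac{\log h(f)}{h(f)}\,\Bigr].
\]

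The final step is to observe that the bracketed factor is bounded below by a positive constant for all large $n$: since $i/n = p/h(f)\le 1/h(f)\to 0$ we have $1-\tfrac{i}{n}\to 1$, and since $h(f)\to\infty$ we have $\tfrac{\log h(f)}{h(f)}\to 0$, so the bracket tends to $c>0$ and in particular exceeds $c/2$ eventually. Hence $n D(\tfrac{i}{n}\|p)\geq \tfrac{c}{2}f(n)$, which diverges because $f(n)=\omega(1)$. I expect the only real subtlety to be ensuring the negative first summand cannot cancel the positive second summand; this is exactly where the hypothesis $h(x)\to\infty$ is used, forcing the magnitude $f\cdot\tfrac{\log h(f)}{h(f)}$ of the first summand to be $o(f)$ and hence asymptotically negligible against the $\Theta(f)$ growth supplied by Lemma~\ref{lower_fg_bound}. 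Everything else is routine bookkeeping.
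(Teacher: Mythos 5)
Your proposal is correct and follows essentially the same route as the paper: split $nD(\tfrac{i}{n}\|p)$ into its two summands, compute the first exactly as $-f(n)\log h(f(n))/h(f(n))$ using $g(x)=x/h(x)$, lower-bound the second via Lemma~\ref{lower_fg_bound}, and conclude that the $\Theta(f(n))$ positive term dominates the $o(f(n))$ negative one. The only (harmless) difference is the last step: the paper invokes the universal bound $\log(x)/x \leq 8/10$ to get an explicit $\tfrac{1}{10}f(n)$ lower bound, whereas you use $\log h(f(n))/h(f(n)) \to 0$; your version also keeps the factor $\bigl(1-\tfrac{i}{n}\bigr)$ explicitly, which the paper silently absorbs into its choice of constant.
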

\begin{proof}
 $$D(\frac{i}{n}||p) = \frac{g(f(n))}{n}\log\left(\frac{g(f(n))}{f(n)}\right)+\left(1-\frac{g(f(n))}{n}\right)\log\left(\frac{1-\frac{g(f(n))}{n}}{1-\frac{f(n)}{n}}\right)$$
 For $n$ sufficiently large, by Lemma~\ref{lower_fg_bound} this implies  
\begin{align*}
D(\frac{i}{n}||p) &\geq 
\frac{g(f(n))}{n}\log(\frac{g(f(n))}{f(n)})+\frac{9}{10}\frac{f(n)}{n}\\
\Leftrightarrow
nD(\frac{i}{n}||p) &\geq g(f(n))\left(-\log(f(n)) +\log(g(f(n)))\right) + \frac{9}{10}f(n)
\end{align*}

Using the fact that $g(x) = \frac{x}{h(x)}$, we get:
\begin{align*}
nD(\frac{i}{n}||p) &\geq \frac{f(n)}{h(f(n))}\left(-\log(f(n)) + \log\left(\frac{f(n)}{h(f(n))}\right)\right) + \frac{9}{10}f(n)\\
&=-f(n)\frac{\log(h(f(n)))}{h(f(n))} + \frac{9}{10}f(n) \rightarrow \infty.
\end{align*}
where the last equality follows from the fact that $\frac{\log(x)}{x} \leq \frac{8}{10}$ for all $x > 0$.
\end{proof}

Lemma~\ref{binomial_bound} now follows by combining inequality (\ref{binomial_ineq}) with Lemma \ref{tight_KL_bound}.

\end{document}